\newtheorem{thm}{Theorem}[section]
\newtheorem{lem}{Lemma}[section]
\numberwithin{equation}{section}
\newcommand{\D}{\displaystyle}
\newcommand{\be}{\begin{equation}}
\newcommand{\ee}{\end{equation}}
\newcommand{\dd}{\textup{d}}
\newcommand{\ii}{\textup{i}}
\newcommand{\la}{\lambda}
\newcommand{\ord}{\mathrm{O}}
\renewcommand{\Im}{\operatorname{Im}}
\DeclareMathOperator{\res}{res}
\begin{document}
\title[Complete linearization of a mixed problem to the MB equations]{Complete linearization of a mixed problem to the Maxwell-Bloch equations by matrix Riemann-Hilbert problem}
 
\author[V. Kotlyarov]{Volodymyr Kotlyarov}
\address{Institute for Low Temperature Physics 47, Lenin ave, 61103 Kharkiv, Ukraine}
\email{\href{mailto:kotlyarov@ilt.kharkov.ua}{kotlyarov@ilt.kharkov.ua}}

\keywords{Maxwell-Bloch equations, mixed problem, matrix Riemann-Hilbert problems}
\subjclass[2000]{Primary 37K40, 35Q53; Secondary 37K45, 35Q15}

\begin{abstract}
Considered in this paper the Maxwell-Bloch (MB) equations became known after Lamb
\cite{L1}-\cite{L4}. In  \cite{AKN} Ablowitz, Kaup and Newell proposed  the
inverse scattering transform (IST) to the Maxwell-Bloch equations for studying a physical phenomenon known as the self-induced transparency. A description of general solutions to the MB equations and their classification was done in \cite{GZM} by Gabitov, Zakharov and Mikhailov. In particular, they gave an approximate solution of the mixed problem to the MB equations in the domain $x,t\in(0,L)\times (0,\infty)$ and, on this bases, a description of  the phenomenon  of superfluorescence.   It was emphasized in \cite{GZM} that the IST method is non-adopted for the mixed problem.
Authors of the mentioned papers have developed the IST method in the form of the Marchenko integral equations. We propose another approach for solving the mixed problem to the MB equations in the quarter plane. We use a simultaneous spectral analysis of the both Lax operators and matrix Riemann-Hilbert (RH) problems. First, we introduce appropriate compatible solutions of the corresponding Ablowitz-Kaup-Newell-Segur (AKNS) equations and then we suggest such a matrix RH problem which corresponds to the mixed problem for MB equations. Second, we generalize this matrix RH problem, prove a unique solvability of the new RH problem and show that the RH problem (after a specialization of jump matrix) generates the MB equations. As a result we obtain solutions defined on the whole line and studied in \cite{AKN} and \cite{GZM}, solutions to the mixed problem studied below in this paper and solutions with a periodic (finite-gap) boundary conditions. The kind of solution is defined by the specialization of conjugation contour and jump matrix. Suggested matrix RH problems will be useful for studying the long time/long distance ($x\in\mathbb{R_+}$) asymptotic behavior of solutions to the MB equations using the Deift-Zhou method of steepest decent.
\end{abstract}
%\pacs{03.50.De, 03.50.Kk, 03.65.Db, 02.30.Jr, 02.30.Hq, 02.30.-f}
%\submitto{\JPA}
\maketitle

%%%%%%%%%%%%%%%%%%%%%%%%%%%%%%%%%%%%%%%%%%%%%%%%%%%%%%%%%%---------------------------------------------------------% Introduction
\section{Introduction}

The Maxwell-Bloch equations arise in different physical problems. Most significant applications
of this system deal with the problem of the propagation of an electromagnetic wave (ultrashort optical pulse) in a resonant medium with distributed two-level atoms. In particular, there are
the problem of self-induced transparency, the laser problems of  quantum amplifier and  super-fluorescence.
The system of the Maxwell-Bloch (MB) equations can be written in the form
\begin{alignat}{2}                                                   \label{MB1}
{\mathcal E}_t+{\mathcal E}_x =&\langle\rho\rangle,  \\
\rho_t+2\ii\lambda\rho=&{\mathcal N}{\mathcal E},\label{MB2}\\
{\mathcal N}_t =&-\frac{1}{2}({\mathcal E}^*\rho+{\mathcal E}\rho^*)\label{MB3}.
\end{alignat}
Here ${{\mathcal E}=\mathcal E}(t,x)$ is the complex electric field envelope, so that the field in the resonant medium is
\[
{\bf E}(t,x)={\mathcal E}(t,x)e^{\ii\Omega(x-t)}+{\mathcal E}^*(t,x)e^{-\ii\Omega(x-t)}.
\]
Subindexes mean partial derivatives in $t$ and $x$, and $*$ means a complex conjugation.
${\mathcal N}={\mathcal N}(t,x,\lambda)$ and $\rho=\rho(t,x,\lambda)$ are entries of the density matrix of a quantum  two-level atom subsystem. The parameter $\la$ is the deviation of the transition frequency of the given two-level atom from the mean frequency $\Omega$. The angular brackets mean  averaging
\be                                                                   \label{av}
\langle\rho\rangle=\Omega\int\limits_{-\infty}^\infty n(\lambda)\rho(t,x,\lambda)\dd\lambda \nonumber
\ee
with the given "weight" function $n(\lambda)$, such that
$$
\int\limits_{-\infty}^\infty  n(\lambda)\dd\lambda=\pm 1.
$$
The weight function $n(\lambda)$ characterizes  the inhomogeneous broadening, i.e. a form of the spectral line. It is the difference between the initial populations of the upper and lower levels. If  $n(\lambda)>0$, then an unstable medium is considered (the so-called quantum laser amplifier). If  $n(\lambda)<0$, then a stable medium is considered (the so-called attenuator).

The mixed problem to the MB equations is defined by initial and boundary conditions
\begin{alignat}{2}                                        \label{ic}
{\mathcal E}(0,x)=&{\mathcal E}_0(x),  & \qquad 0<x<L\le\infty\\
\rho(0,x,\lambda)=&\rho_0(x,\lambda),\\
{\mathcal N}(0,x,\lambda)=&{\mathcal N}_0(x,\lambda),\label{ic3}\\
                                       \label{bc}
{\mathcal E}(t,0)=&{\mathcal E}_{in}(t),  & \qquad 0<t<\infty.
\end{alignat}

We assume that entering in the medium $0<x<L$ pulse ${\mathcal E}_{in}(t)$
is smooth and fast decreasing. Initial functions ${\mathcal E}_0(x)$ and $\rho_0(x,\lambda)$ are also smooth for $0<x<L$ and $\la\in\mathbb{R}$.
Function ${\mathcal N}_0(x,\la)$ is defined by $\rho_0(x,\lambda)$:
\be    \label{no}
{\mathcal N}_0(x,\lambda)= \sqrt{1-|\rho_0(x,\lambda)|^2},\qquad  0< x<L \le\infty.
\ee
Here we chose the positive branch of the square root.
To find (\ref{no}) we use the second and the third equations of (\ref{MB1})-(\ref{MB3}). They give
\[
\frac{\partial}{\partial t}({\mathcal N}^2(t,x,\lambda)+|\rho(t,x,\lambda)|^2)=0,
\]
and one can put
\be\nonumber
{\mathcal N}^2(t,x,\lambda)+|\rho(t,x,\lambda)|^2\equiv 1.
\ee
In what follows we also put $\Omega=1$ in (\ref{av}).

The Lax pair for the Maxwell-Bloch system was first found in \cite{AKN} by using
results of \cite{L1}-\cite{L4} (see also \cite{GZM} and \cite{AS}).
It was shown that (\ref{MB1})-(\ref{MB3}) are equivalent to the over determined linear system, known as Ablowitz-Kaup-Newell-Segur (AKNS) equations:
\begin{align}                                                    \label{teq}
w_t+\ii\lambda\sigma_3 w&=-H(t,x) w,\\                                      \label{xeq}
w_x-\ii\lambda\sigma_3 w +\ii G(t,x,\lambda) w&=H(t,x) w,
\end{align}
where $\sigma_3,  H(t,x), G(t,x,\lambda)$ are the matrices
\begin{align*}
&\sigma_3 =
\begin{pmatrix}1&0\\0&-1\end{pmatrix},\quad H(t,x)=\frac{1}{2}
\begin{pmatrix}0&{\mathcal E}(t,x)\\
-\bar{\mathcal E}(t,x)&0\end{pmatrix}, \notag\\
&G(t,x,\lambda)= {\rm p.v.}\frac{1}{4}\int\limits_{-\infty}^\infty
\begin{pmatrix}
  {\mathcal N}(t,x,s) &\rho(t,x,s) \\
  \rho^*(t,x,s) &-{\mathcal N}(t,x,s) \\
\end{pmatrix}\frac{n(s)}{s-\lambda}\dd s,\label{G}
\end{align*}
and the symbol $\rm{p.v.}$ denotes the Cauchy principal value
integral. Differential equations (\ref{teq}) and (\ref{xeq})
are compatible if and only if  ${\mathcal E}(t,x)$,
$\rho(t,x,\lambda)$ and ${\mathcal N}(t,x,\lambda)$ satisfy equations
(\ref{MB1})-(\ref{MB3}) (cf. \cite{AS}). As shown in \cite{L4},
$\rho(t,x,\lambda)$ and ${\mathcal N}(t,x,\lambda)$ are related to the fundamental solutions of (\ref{teq}). Indeed, let $\varPhi(t,x,\lambda)$ be a
solution of (\ref{teq}) such that $\det\varPhi(t,x,\lambda)\equiv1$. Then it is easy to show that $F(t,x,\lambda)=\varPhi(t,x,\lambda)\sigma_3\varPhi^\dag(t,x,\lambda)$,
where $\varPhi^\dag$ is the Hermitian-conjugated to $\varPhi$,
satisfies equation \be\label{F}
F_t+[\ii\la\sigma_3+H,F]=0,\qquad F(t,x,\lambda)=\begin{pmatrix}
  {\mathcal N}(t,x,\lambda) &\rho(t,x,\lambda) \\
  \rho^*(t,x,\lambda) &-{\mathcal N}(t,x,\lambda) \\
\end{pmatrix}
\ee
which is a matrix representation of equations (\ref{MB2})-(\ref{MB3}).
It is well-known that the fundamental solution of equation (\ref{teq}) is entire in $\la$. Therefore the matrix $F(t,x,\la)=\varPhi(t,x,\la)\sigma_3\varPhi^\dag(t,x,\la)$
is smooth in $\la\in\mathbb{R}$ if initial matrix $F(0,x,\la)$ is smooth too.

In what follows we will use the $x^\pm$-equations (upper/lower
bank equations instead of $x$-equation):
\be\label{pmxeq}
w_x-\ii\lambda\sigma_3 w +\ii G_\pm(t,x,\lambda) w=H(t,x) w, \ee
where \be\nonumber
G_\pm(t,x,\lambda)=\frac{1}{4}\int\limits_{-\infty}^\infty
\frac{F(t,x,s) n(s)}{s-\lambda\mp\ii0}d s={\rm
p.v.}\frac{1}{4}\int\limits_{-\infty}^\infty
\D\frac{F(t,x,s)n(s)}{s-\la
}ds\pm\frac{\pi\ii}{4}F(t,x,\la)n(\la). \ee

Thus we have new Lax pairs ($t$- and $x^+$-equations and $t$- and $x^-$-equations)
for the MB equations.  Equations (\ref{teq}) and
(\ref{pmxeq}) (as well as (\ref{teq}) and (\ref{xeq})) are
compatible if and only if   ${\mathcal E}(t,x)$,
$\rho(t,x,\lambda)$ and ${\mathcal N}(t,x,\lambda)$ satisfy equations
(\ref{MB1})-(\ref{MB3}). This known fact is proved below (see section 2).

Main goal of the paper is to show that the inverse
scattering transform is applicable to the  mixed problem for the MB
equations using a simultaneous spectral analysis of the compatible
differential equations (\ref{teq}) and (\ref{pmxeq}). We develop the IST method in the form of matrix Riemann-Hilbert problem in the complex $z-$plane and give an integral representation for ${\mathcal E}(t,x )$ through the solution of a singular integral equation, which is equivalent to the matrix RH problem. To produce this RH problem we introduce appropriate compatible solutions of the corresponding AKNS equations and input spectral (scattering) data defined through given initial and boundary conditions for MB equations. Then we suggest such a matrix RH problem which corresponds to the mixed problem for MB equations. Further, we give a formulation of more general matrix RH problem, prove a unique solvability of the new RH problem. After a specialization of jump matrix we prove that the RH problem  generates the AKNS linear equations for the MB equations with a given inhomogeneous broadening. Thus this RH problem generates different solutions to the MB equations. There are solutions defined on the whole line and studied in \cite{AKN} and \cite{GZM}, solutions to the mixed problem (\ref{ic})-(\ref{bc}) with vanishing boundary conditions and periodic (finite-gap) boundary conditions, etc. The kind of solution is defined by the specialization of conjugation contour and jump matrix. Suggested matrix RH problems will be useful for studying the long time/long distance ($x\in\mathbb{R_+}$) asymptotic behavior of solutions to the MB equations using the nonlinear method of steepest decent.

Our approach differs from that was proposed in \cite{Kis} for the Goursat problem to the MB equations.
We develop the method initially proposed in \cite{F1}-\cite{FI2} and also in \cite{BMK}-\cite{MK}.
This method uses the simultaneous spectral analysis of the Lax operators.
We formulate such a matrix RH problem which allow to prove that the mixed problem is linearizable completely.

%%%%%%%%%%%%%%%%%%%
We will  use the following notations: if $A$ denotes  matrix
$A=(A[1], A[2])$, then  vectors $A[1]$, $A[2]$ denote
the first and the second columns of matrix $A$. We also set
$[A,B]=AB-BA$.
%-------------------------------------------------------------% Section 2
\section{Compatible solutions of the AKNS equations}
\setcounter{equation}{0}
%----------------------%

Let
\begin{align}                                                  \label{U}
W_t&=U(t,x,\lambda) W,\\                                              \label{V}
W_x&=V(t,x,\lambda) W
\end{align}
be a system of matrix differential equations.
Equations (\ref{U}),  (\ref{V}) are compatible if $W_{tx}\equiv
W_{xt}$ for any  solution of these equations, i.e.\
the following condition holds~:
\begin{equation}                                              \label{ZCC}
U_x-V_t+UV-VU=0.
\end{equation}

%----------------------%
\begin{lem} \label{lem 2.1}
Let  $U(t,x,\lambda)$ and $V(t,x,\lambda)$   are defined and smooth for all
$t,x,\lambda \in\mathbb{R}$ and satisfy condition $(\ref{ZCC})$.
Let $W(t,x,\lambda)$ satisfy $t$-equation $(\ref{U})$ for all $x$, and
let $W(t_0,x,\lambda)$ satisfy  $x$-equation $(\ref{V})$ for some
$t=t_0$, $t_0\le\infty$. Then  $W(t,x,\lambda)$ satisfies $x$-equation for all $t$. The same result is true, if one changes $t$ into $x$ and vice versa.
\end{lem}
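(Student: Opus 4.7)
The plan is to introduce the ``defect'' matrix $Y(t,x,\lambda) := W_x(t,x,\lambda) - V(t,x,\lambda)W(t,x,\lambda)$ and to show that $Y$ satisfies a linear ODE in $t$ whose initial data (at $t=t_0$) vanishes, forcing $Y\equiv 0$. Concretely, I would first differentiate the $t$-equation $(\ref{U})$ in $x$ to get the mixed partial $W_{tx}=U_xW+UW_x$, then compute
\begin{equation*}
Y_t = W_{xt} - V_tW - VW_t = (U_x-V_t)W + UW_x - VUW.
\end{equation*}
Substituting the zero-curvature condition $(\ref{ZCC})$, $U_x-V_t = VU-UV$, collapses the right-hand side to $-UVW + UW_x = U(W_x-VW) = UY$. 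Thus $Y$ solves the homogeneous linear matrix ODE
\begin{equation*}
Y_t(t,x,\lambda) = U(t,x,\lambda)\,Y(t,x,\lambda).
\end{equation*}

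Next I would invoke the standing smoothness hypothesis on $U$ and $V$ together with standard existence--uniqueness for linear ODEs: since $Y(t_0,x,\lambda)=0$ by the assumption that $W(t_0,x,\lambda)$ satisfies $(\ref{V})$, we conclude $Y(t,x,\lambda)\equiv 0$ for all $t\in\mathbb{R}$ (with $x,\lambda$ fixed but arbitrary). The case $t_0=\infty$ permitted by the statement will require a small additional argument: one has to read ``$W(t_0,x,\lambda)$ satisfies $(\ref{V})$'' as a limiting condition $\lim_{t\to\infty}Y(t,x,\lambda)=0$, and then use that any solution of $Y_t=UY$ with this asymptotic behavior must vanish identically, provided $U$ has enough structure (e.g., skew-Hermitian leading term) to preclude exponential growth of non-trivial solutions. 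This is the only mildly delicate point; for the finite $t_0$ case the argument is entirely routine.

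Finally, the symmetric assertion (exchanging the roles of $t$ and $x$) is obtained by exactly the same computation with $Z:=W_t-UW$ in place of $Y$: differentiating $W_x=VW$ in $t$ and using $(\ref{ZCC})$ yields $Z_x=VZ$, and the same uniqueness argument applies. The main obstacle, if any, is the $t_0=\infty$ subtlety described above; the algebraic core of the proof is a one-line consequence of the zero-curvature condition.
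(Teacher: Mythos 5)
Your proof is correct and follows essentially the same route as the paper: both introduce the defect matrix $W_x-VW$, use the zero-curvature condition to show it satisfies the $t$-equation $(\ref{U})$, and conclude it vanishes identically from its vanishing at $t=t_0$ (the paper phrases the last step as linear dependence $\widehat W=WC$ rather than invoking ODE uniqueness directly, but this is the same argument). Your explicit flagging of the $t_0=\infty$ case as a limiting condition needing control on the growth of solutions is a point the paper's proof silently passes over.
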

%----------------------%

%----------------------%
\begin{proof}
Let $\widehat{W}(t,x,\lambda)= W_x-V(t,x,\lambda) W$. Then $\widehat{W}(t,x,\lambda)$ solves  equation (\ref{U}).  Indeed
\[
\widehat{W}_t=U(t,x,\lambda)\widehat{W}+( U_x-V_t+[U,V]) W =U(t,x,\lambda)\widehat{W}.
\]
Matrices $W$ , $\widehat{W}$ are solutions of the same equation. Therefore they are linear dependent, i.e.
$
\widehat{W}(t,x,\la)= W(x,t,\la)C(x,\la).
$
Since $\widehat{W}(t_0,x,\la)=0$, then $C(x,\la)$ and $\widehat{W}(t,x,\la)$ are
identically equal to zero. Lemma \ref{lem 2.1} is proved.
\end{proof}
%----------------------%

Due to (\ref{teq}), (\ref{xeq}) and (\ref{pmxeq})  we have three AKNS systems. The $t$-equation (\ref{teq}) is defined by matrix $U(t,x,\la)=-\ii\la\sigma_3-H(t,x)$ while the $x$-equation (\ref{xeq}) and upper/lower bank $x^\pm$-equations (\ref{pmxeq}) are defined by matrices:
\begin{align*}
V(t,x,\la)=&\ii\la\sigma_3+H(t,x)-\ii G(t,x,\la),\\
V^\pm(t,x,\la)=&V(t,x,\la)\pm\D\frac{\pi\ii n(\la)}{4}F(t,x,\la).
\end{align*}
The compatibility condition of  $t$-equation (\ref{teq}) and $x$-equation
(\ref{xeq}) takes the form of (\ref{ZCC}). The compatibility condition of the $t$-equation (\ref{teq}) and $x^\pm$-equations (\ref{pmxeq}) is as follows:
\[
U_x-V^\pm_t+[U,V^\pm]=U_x-V_t+[U,V]\mp\frac{\pi\ii n(\la)}{4}(F_t+[\ii\la\sigma_3+H,F])=0.
\]
If $F_t+[\ii\la\sigma_3+H,F]=0$ (in view of eq.(\ref{F})), then the new compatibility conditions coincides with (\ref{ZCC}). Conversely, if $U_x-V_t+[U,V]=0$, then  $F_t+[\ii\la\sigma_3+H,F]=0$ as well as $U_x-V^\pm_t+[U,V^\pm]=0$. Each of them gives the same MB equations. The upper/lower bank equation (\ref{pmxeq}) (in a contrast with $x$-equation (\ref{xeq})) will allow to obtain such vector-solutions of the AKNS equations, which have analytic continuation to the upper and lower complex $z$-plane ($z=\la+\ii\nu$). We will use them for a construction of the matrix Riemann-Hilbert problem, which gives a solution of the mixed problem to the Maxwell-Bloch equations.

\begin{thm}
  Let  ${\mathcal E}(t,x )$, ${\mathcal N}(t,x,\la)$,  $\rho(t,x,\la)$ be a smooth   solution to the MB equations (\ref{MB1})-(\ref{MB3}) defined for all $t,x\in\mathbb{R}_+$ and  $\la\in\mathbb{R}$. Let ${\mathcal E}(t,0)$ is fast decreasing as $t\to+\infty$ and let $\rho(0,x,\la)\equiv0$ for $x>L$ or is also fast decreasing as $x\to+\infty$ ( if $L=\infty$). Then there exists two pairs of compatible solutions $Y^\pm(t,x,\la)$ and $Z^\pm(t,x,\la)$ of the AKNS equations (\ref{teq}), (\ref{pmxeq}) such that
\begin{equation}                                         \label{Y}
Y^\pm(t,x,\la)= W^\pm(t,x,\la)\Phi(t,\la),
\end{equation}
\begin{equation}                                         \label{Z}
Z^\pm(t,x,\la)= \Psi(t,x,\la)w^\pm(x,\la).
\end{equation}
Here $W^\pm(t,x,\la)$ satisfies $x^\pm$-equation (\ref{pmxeq}) for all $t$, $W(t,0,\la)=I$, and $\Phi(t,\la)$  satisfies  $t$-equation (\ref{teq}) with $x=0$ under the Jost initial condition:
$$
 \lim\limits_{t\to\infty}\Phi(t,\la)e^{\ii\la t\sigma_3}=I.
$$
Function $\Psi(t,x,\la)$ satisfies  $t$-equation (\ref{teq}) under initial condition $\Psi(0,x,\la)=I$, and $w^\pm(x,\la)$ satisfies $x^\pm$-equation (\ref{pmxeq}) with $t=0$ under initial condition $w^\pm(L,\la)=e^{\ii
L\eta_\pm(\la)\sigma_3}$, where
\be\label{etapm}
\eta_\pm(\la)=\la-\frac{1}{4}\int\limits_{-\infty}^\infty
\D\frac{n(s) ds}{s-\la\mp\ii 0}.
\ee
If $L=\infty$ then the condition at $x=L$ changes with the Jost initial condition:
$$
\lim\limits_{x\to\infty} w^\pm(x,\la)e^{-\ii\eta_\pm(\la)x\sigma_3}=I.
$$

\begin{proof}
If matrices $W^\pm(t,x,\la)$, $\Phi(t,\la)$ and
$\Psi(t,x,\la)$, $w^\pm(x,\la)$ do exist then, due to Lemma 2.1, the products $W^\pm(t,x,\la)\Phi(t,\la)$ and $\Psi(t,x,\la)w^\pm(x,\la)$ are  compatible solutions  of the AKNS equations (\ref{teq}), (\ref{pmxeq}). The existence of these matrices will be given in the next Lemmas.
\end{proof}
\end{thm}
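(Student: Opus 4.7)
The plan is to obtain the four matrices $W^\pm$, $\Phi$, $\Psi$, $w^\pm$ separately as solutions of linear ordinary differential equations, each subject to its own initial or asymptotic condition, and then to invoke Lemma \ref{lem 2.1} to conclude that the products $Y^\pm=W^\pm\Phi$ and $Z^\pm=\Psi w^\pm$ are automatically compatible solutions of the AKNS pair (\ref{teq}), (\ref{pmxeq}). The theorem thereby splits into two independent tasks: existence of the four building blocks, and verification of compatibility for the products.

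The factors $W^\pm(t,x,\la)$ and $\Psi(t,x,\la)$ are obtained by standard forward integration of linear ODEs in $x$ and in $t$ respectively, starting from the identity matrix at a finite endpoint; smoothness of the coefficients $H$ and $G_\pm$, inherited from the regularity of $\mathcal E$, $\rho$, $\mathcal N$ together with the Plemelj decomposition of $G_\pm$, guarantees unique smooth solutions. The matrix $\Phi(t,\la)$ is a Jost solution of the $t$-equation at $x=0$: setting $\tilde\Phi(t,\la)=\Phi(t,\la)e^{\ii\la t\sigma_3}$ transforms (\ref{teq}) at $x=0$ into a Volterra integral equation on $[t,\infty)$ whose kernel is integrable because $\mathcal E_{in}=\mathcal E(\cdot,0)$ is fast decreasing; a Neumann-series iteration then yields the required unique smooth solution with $\tilde\Phi(t,\la)\to I$. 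The matrix $w^\pm(x,\la)$ is analogous: if $L<\infty$ it is simply the backward solution of the $x^\pm$-equation at $t=0$ from the explicit boundary value $e^{\ii L\eta_\pm(\la)\sigma_3}$, whereas if $L=\infty$ one writes $w^\pm=\tilde w^\pm e^{\ii\eta_\pm(\la)x\sigma_3}$ so that the diagonal phase $\eta_\pm(\la)$ absorbs the non-decaying tail of $\ii\la\sigma_3-\ii G_\pm(0,x,\la)$ (using $F(0,x,s)\to\sigma_3$ and the Sokhotski--Plemelj identity hidden in the $\mp\ii 0$ prescription), reducing the equation to a Volterra one on $[x,\infty)$.

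Once the four factors exist, compatibility is nearly immediate. The product $Y^\pm=W^\pm\Phi$ satisfies the $x^\pm$-equation for all $t$ because $W^\pm$ does and $\Phi$ is $x$-independent, while at $x=0$ it reduces to $\Phi(t,\la)$, which satisfies the $t$-equation at $x=0$ by construction; Lemma \ref{lem 2.1} (with $t$ and $x$ exchanged) then propagates the $t$-equation from $x=0$ to all $x$. The mirror argument for $Z^\pm=\Psi w^\pm$ uses that $\Psi$ gives the $t$-equation for all $x$, while at $t=0$ the product equals $w^\pm$, which satisfies the $x^\pm$-equation at $t=0$, and Lemma \ref{lem 2.1} supplies the $x^\pm$-equation for all $t$. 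The main obstacle in this plan is thus the Jost construction of $w^\pm$ when $L=\infty$: one must verify that after absorbing $\eta_\pm(\la)$ the residual perturbation is genuinely integrable in $x$, which combines the fast decrease of $\rho_0(x,\la)$ with the $\mp\ii 0$ boundary-value structure of $G_\pm$; the detailed existence arguments are deferred to subsequent lemmas, as the author indicates.
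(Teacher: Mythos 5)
Your proposal takes essentially the same route as the paper: the theorem's own proof is precisely the reduction to Lemma \ref{lem 2.1} (applied once with the roles of $t$ and $x$ as you describe for $Y^\pm$, and once in the original form for $Z^\pm$), with the existence of the four factors $W^\pm$, $\Phi$, $\Psi$, $w^\pm$ deferred to the subsequent Lemmas 2.2--2.5, which construct them by exactly the Volterra/Jost integral-equation arguments you outline (including the absorption of the phase $e^{\ii\eta_\pm(\la)x\sigma_3}$ in the $w^\pm$ construction). Your write-up is correct and, if anything, more explicit than the paper about how Lemma \ref{lem 2.1} is invoked at $x=0$ and $t=0$.
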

\begin{figure}[ht]
\begin{center}
\epsfig{width=60mm,figure=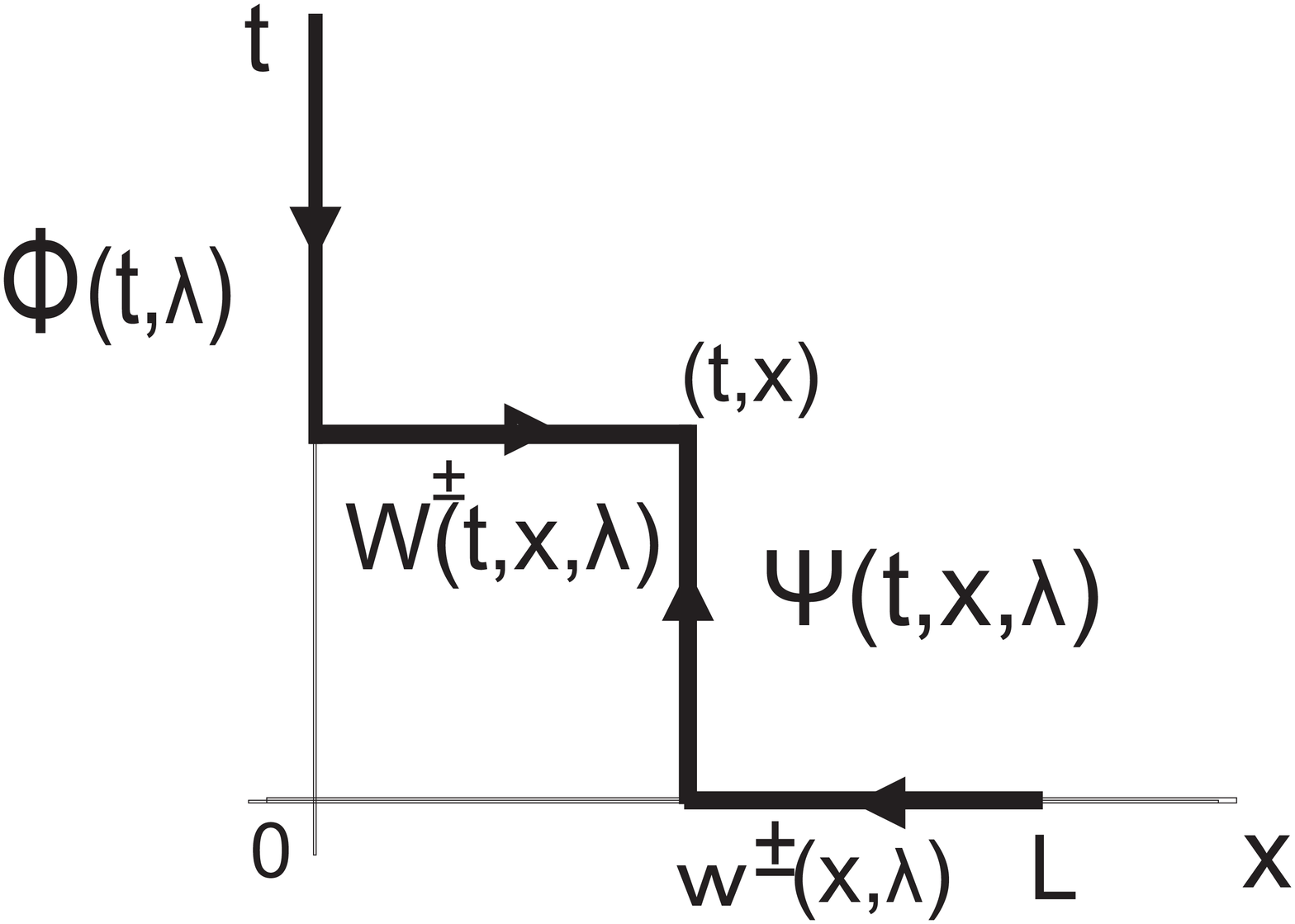 }
\end{center}
\end{figure}
%----------------------%
%----------------------%
\begin{lem}                                             \label{lem 2.2}
Let ${\mathcal E}(t,0)={\mathcal E}_1(t)$ be smooth and fast decreasing as $t\to \infty$. Then the Jost solution  $\Phi(t,\la)$ has  an integral representation:
\begin{equation}                                             \label{varPhi}
\Phi(t,\la)=e^{-\ii\la t\sigma_3}+\int_t^\infty K(t,\tau)e^{-\ii\la\tau\sigma_3}\dd\tau, \qquad
\Im\la=0.
\end{equation}
The kernel $K(t,\tau)$ satisfies the symmetry condition
$ {K}^*(t,\tau)= \Lambda  K(t,\tau) \Lambda^{-1}$
with matrix $ \Lambda=\begin{pmatrix}0&1\\-1&0
\end{pmatrix}$  and
$
[\sigma_3, K(t,t)]=\sigma_3 H(t,0).
$
\noindent The kernel $K (t,\tau)$ is smooth and fast decreasing as $t+\tau\to \infty$.
\end{lem}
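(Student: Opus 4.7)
The approach is the classical AKNS construction of a triangular (Jost) representation. Substituting the ansatz $(\ref{varPhi})$ into the $t$-equation $(\ref{teq})$ evaluated at $x=0$, i.e.\ $\Phi_t+\ii\la\sigma_3\Phi+H(t,0)\Phi=0$, I pick up a boundary term $-K(t,t)e^{-\ii\la t\sigma_3}$ from the lower limit of the integral and leave an integrand that still contains $\ii\la\sigma_3 K(t,\tau) e^{-\ii\la\tau\sigma_3}$. To trade this $\la$ for a $\tau$-derivative I would use the identity $\ii\la\sigma_3 e^{-\ii\la\tau\sigma_3}=-\tfrac{\dd}{\dd\tau}e^{-\ii\la\tau\sigma_3}$ and integrate by parts in $\tau$; the vanishing of the resulting boundary term at $\tau=\infty$ is part of what must be verified \emph{a posteriori} from the decay of $K$.

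Demanding that the result hold for every real $\la$ and invoking injectivity of the one-sided Fourier-type transform $K(t,\cdot)\mapsto\int_t^\infty K(t,\tau)e^{-\ii\la\tau\sigma_3}\dd\tau$ separates the identity into a diagonal (boundary) relation and a Goursat PDE on the wedge $\{(t,\tau):0\le t\le\tau\}$:
\begin{equation*}
K_t(t,\tau)+\sigma_3 K_\tau(t,\tau)\sigma_3+H(t,0)K(t,\tau)=0,
\end{equation*}
together with $K(t,t)-\sigma_3 K(t,t)\sigma_3=H(t,0)$, which, since $H(t,0)$ is off-diagonal, coincides with $[\sigma_3,K(t,t)]=\sigma_3 H(t,0)$, and the decay condition $K(t,\tau)\to 0$ as $t+\tau\to\infty$.

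To solve this Goursat problem I would split $K=K^d+K^o$ into diagonal and off-diagonal parts; because $\sigma_3 K^d\sigma_3=K^d$ while $\sigma_3 K^o\sigma_3=-K^o$, the PDE decouples along the two characteristic families $\tau-t=\mathrm{const}$ and $\tau+t=\mathrm{const}$. Integrating each component from $+\infty$ along the appropriate characteristic and using the diagonal relation to close the system gives a Volterra-type system of integral equations whose forcing is built out of $\mathcal E_1$ and $\bar{\mathcal E}_1$. Existence, uniqueness, smoothness, and fast decrease of $K$ as $t+\tau\to\infty$ then follow by Picard iteration in an appropriate weighted norm, with the geometric rate of convergence and the final decay rate controlled by the tail integrals of $\mathcal E_1$ and its derivatives. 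The main technical obstacle I anticipate is the careful bookkeeping of the decay estimates at both infinite endpoints, so that the original integration by parts is rigorously justified and the iterates reproduce the stated decay uniformly in $t$.

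Finally, the symmetry $K^*(t,\tau)=\Lambda K(t,\tau)\Lambda^{-1}$ reflects the algebraic identity $\Lambda H\Lambda^{-1}=H^*$, a direct computation using the explicit form of $H(t,0)$, combined with $\Lambda\sigma_3\Lambda^{-1}=-\sigma_3$. Taking the entrywise complex conjugate of the $t$-equation and then conjugating by $\Lambda$ shows that $\Lambda\Phi^*(t,\la)\Lambda^{-1}$ satisfies the same equation for real $\la$ and inherits the same Jost normalization at $t=+\infty$; uniqueness of the Jost solution therefore forces $\Lambda\Phi^*\Lambda^{-1}=\Phi$, which on inserting the integral representation and equating Fourier transforms translates into the stated symmetry for $K$.
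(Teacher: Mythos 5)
Your argument is correct and is precisely the classical transformation-operator (Goursat-problem) construction that the paper itself does not reproduce but defers to with ``the proof of this Lemma is well-known (cf.\ \cite{FT})''; the derived kernel equation $K_t+\sigma_3K_\tau\sigma_3+H(t,0)K=0$, the diagonal condition $2K^{o}(t,t)=H(t,0)$ (equivalent to $[\sigma_3,K(t,t)]=\sigma_3H(t,0)$ since $H$ is off-diagonal), the characteristic integration from $\tau=+\infty$ yielding a Volterra system, and the $\Lambda$-symmetry via uniqueness of the Jost solution all match the standard proof. No gaps.
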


The proof of this Lemma is well-known (cf.\ \cite{FT}). Due to this integral representation, vector-columns $\Phi[1](t,\la)$ and $\Phi[2](t,\la)$ of the matrix $\Phi(t,\la)=(\Phi[1](t,\la), \Phi[2](t,\la))$ have  analytic continuations $\Phi[1](t,z)$ and $\Phi[2](t,z)$ to the lower and upper half planes of the
complex $z$-plane respectively.  The next asymptotic formulae are valid:
\begin{itemize}
\item  $ \Phi[1](t,z)e^{\ii z t}=\begin{pmatrix}
                                  1 \\
                                  0
                                \end{pmatrix} +O(z^{-1}), \quad \Im z\le0,\quad z\to\infty,
                                $
\item $\Phi[2](t,z)e^{-\ii z t}=\begin{pmatrix}
                                  0 \\
                                  1
                                \end{pmatrix} +O(z^{-1}), \quad\Im z\ge0\quad z\to\infty$.
\end{itemize}
They are a simple consequence of integral representation (\ref{varPhi}).
%----------------------%
\begin{lem}                                            \label{lem 2.3}
Let $n(\la)$ be a H\"older function for $\la\in\mathbb{R}$ and
let ${\mathcal E}(t,x)$, ${\mathcal N}(t,x,\la)$,  $\rho(t,x,\la)$ be smooth functions
for $t,\la\in\mathbb{R}$ and $x\in\mathbb{R}_+$.
Then the solution $W^\pm(t,x,\la)$ can be represented in the form:
\be\label{W}
W^\pm(t,x,\la)= e^{\ii\la x\sigma_3}\chi^\pm(t,x,\la),\ee
where $\chi^\pm(t,x,\la)$ is the unique solution of the Volterra integral
equation
\begin{equation}      \label{hatW}
\chi^\pm(t,x,\la)=I +\int\limits_0^x
e^{-\ii\la y \sigma_3}(H(t,y)-\ii G_\pm(t,y,\la))e^{\ii\la y\sigma_3}\chi^\pm(t,y,\la)
dy,
\end{equation}
The solutions $W^\pm(t,x,\la)$ are smooth in $t$ and $x$, related each other by the
formula:
$$
W^{-}(t,x,\la)=\sigma_2W^{+*}(t,x,\la)\sigma_2,\qquad \sigma_2=\begin{pmatrix}
0&\ii\\-\ii&0\end{pmatrix}
$$ and have an analytic continuations $W(t,x,z)$ to the upper $\mathbb{C}_+$ and  $\overline W(t,x,z)$ to the lower $\mathbb{C}_-$ complex $z$-plane respectively. Moreover,  $\overline W(t,x,z)e^{-\ii zx}$ is continuous, bounded in $\mathbb{C}_-\cup\mathbb{R}$ and have the following asymptotics:
\[
\overline  W (t,x,z)e^{-\ii zx}=\begin{pmatrix}
                                  1&0 \\
                                  0& e^{-2\ii zx}
\end{pmatrix}+O(z^{-1}), \qquad \Im z\le0,
\quad z\to\infty,
\]
and matrix $W(t,x,z)e^{\ii zx}$ is continuous, bounded
in $\mathbb{C}_+\cup\mathbb{R}$ and have the following asymptotics:
\[
 W (t,x,z)e^{\ii zx}=\begin{pmatrix}
                               e^{2\ii zx} & 0\\
                               0&1
\end{pmatrix}+O(z^{-1}), \qquad \Im z\ge0.
\quad z\to\infty.
\]
\end{lem}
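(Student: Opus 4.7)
The argument splits into four tasks: deriving the Volterra equation $(\ref{hatW})$, establishing existence/uniqueness with smoothness, verifying the reflection symmetry, and producing the analytic continuations with their asymptotics. To derive $(\ref{hatW})$ I would substitute $W^\pm(t,x,\la)=e^{\ii\la x\sigma_3}\chi^\pm(t,x,\la)$ into the $x^\pm$-equation $(\ref{pmxeq})$; the $\ii\la\sigma_3$ terms cancel, and integrating from $0$ with the initial condition $\chi^\pm(t,0,\la)=I$ (i.e.\ $W^\pm(t,0,\la)=I$ as imposed in the preceding Theorem) gives exactly $(\ref{hatW})$. The H\"older hypothesis on $n(\la)$ makes $G_\pm(t,y,\la)$ H\"older continuous and uniformly bounded in $\la\in\mathbb{R}$ by the Plemelj--Privalov theorem, with $G_\pm(t,y,\la)=O(\la^{-1})$ at infinity; the kernel of $(\ref{hatW})$ is therefore bounded on any compact $x$-interval, and standard Picard iteration (with $C^n x^n/n!$ tail estimate) yields a unique, smooth $\chi^\pm$.

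The symmetry $W^-=\sigma_2 W^{+*}\sigma_2$ I would get by complex-conjugating the $x^+$-equation using $\bar\la=\la$ together with three identities: $\sigma_2\sigma_3\sigma_2=-\sigma_3$, $\sigma_2 H^*\sigma_2=H$ (a direct computation from the explicit form of $H$), and $\sigma_2 G_+^*\sigma_2=-G_-$. The last follows from $\sigma_2 F^*\sigma_2=-F$ (another short computation) together with $\overline{s-\la-\ii0}=s-\la+\ii0$. The conjugated matrix then satisfies the $x^-$-equation with initial value $\sigma_2 I\sigma_2=I$ at $x=0$, and the uniqueness from the previous paragraph closes the identification.

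For analyticity in the spectral parameter, fix $(t,x)$ and let $z=\la+\ii\nu$. The Cauchy integral $G_+(t,y,z)=\tfrac14\int F(t,y,s)n(s)(s-z)^{-1}\dd s$ is analytic in $\mathbb{C}_+$ and continuous on $\bar{\mathbb{C}}_+$, so each Picard iterate of $(\ref{hatW})$ is analytic in $z\in\mathbb{C}_+$; uniform convergence of the iteration transfers analyticity to the limit, and $W(t,x,z):=e^{\ii zx\sigma_3}\chi^+(t,x,z)$ is the desired extension, with $\overline W$ in $\mathbb{C}_-$ read off from the previous symmetry. For the asymptotics as $z\to\infty$ I would analyse the columns separately after extracting the natural plane wave. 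The second column, written $W[2]=e^{-\ii zx}m(t,x,z)$, satisfies an integral equation whose off-diagonal factor $e^{2\ii z(x-y)}$ is bounded on $\bar{\mathbb{C}}_+$ for $0\le y\le x$; Gronwall followed by one integration by parts against this oscillation (using smoothness of $H-\ii G_+$ in $y$ and $G_+(t,y,z)=O(z^{-1})$) yields $m=(0,1)^\top+O(z^{-1})$, which is the second column of the claimed asymptotic for $W\,e^{\ii zx}$. For the first column, write $W[1]=e^{\ii zx}n(t,x,z)$; now the off-diagonal factor $e^{-2\ii z(x-y)}$ is unbounded in $\mathbb{C}_+$ and $n$ itself grows, but the scalar $e^{\ii zx}$ in the statement cancels the growth (since $|e^{2\ii zx}|=e^{-2x\,\Im z}\le 1$), and the same integration-by-parts argument then gives $W[1]e^{\ii zx}=(e^{2\ii zx},0)^\top+O(z^{-1})$.

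The main obstacle is this last step: simultaneously controlling both columns of $W$ when one off-diagonal exponential genuinely grows on $\mathbb{C}_+$. The prefactors $e^{\pm\ii zx}$ in the statement are engineered to absorb that growth after the fact, but turning this formal cancellation into a uniform $O(z^{-1})$ bound on all of $\bar{\mathbb{C}}_+$ requires the H\"older regularity of $n$ (hence of $G_\pm$) to make the integration-by-parts / Riemann--Lebesgue step quantitative.
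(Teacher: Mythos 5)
Your proposal is correct and follows essentially the same route as the paper: the paper likewise establishes solvability and smoothness of the Volterra equation \eqref{hatW} by successive approximations, verifies the symmetry $W^-=\sigma_2 W^{+*}\sigma_2$ directly, and deduces the analytic continuation from that of $G_\pm$ into $\mathbb{C}_\pm$, with the asymptotics read off from the integral equation. You in fact supply more detail than the paper does (the explicit conjugation identities and the column-by-column treatment of the asymptotics, where the bounded factor $e^{2\ii z(x-y)}$, $0\le y\le x$, $\Im z\ge 0$, is indeed the key point the paper leaves implicit).
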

\begin{proof}The solvability of the Volterra integral equation (\ref{hatW}) and
the smoothness of the solution with respect to $t$ and $x$ can be easily proved by using  the method of successive approximations. Solutions $W^\pm(t,x,\la)$  are not independent. One can easily verify that $W^{-}(t,x,\la)=\sigma_2W^{+*}(t,x,\la)\sigma_2$.
Since $ G_\pm(t,x,\la)$   has an analytic continuation to the
domain $\mathbb{C}_\pm$, the matrix $ W^\pm(t,x,\la)$ has also an analytic continuation for $z=\la+\ii\nu\in \mathbb{C}_\pm$ which we denote as $ W(t,x,z)$ for $z\in\mathbb{C}_+$ and $ \overline W(t,x,z)$ for $z\in\mathbb{C}_-$.  Thus the matrix $\overline W(t,x,z)e^{-\ii zx}$ ($W(t,x,z)e^{\ii zx}$) is analytic,
continuous and bounded in $\mathbb{C}_-\cup\mathbb{R}$ ($\mathbb{C}_+\cup\mathbb{R}$) and has the above asymptotic behavior.
\end{proof}
%-----------------------%

The previous lemmas and
formulas (\ref{Y}), (\ref{varPhi}), (\ref{W}), (\ref{hatW})
imply the following properties of the matrix
$Y^\pm(t,x,\la)=(Y^\pm[1](t,x,\la)\quad Y^\pm[2](t,x,\la))$:
\begin{enumerate}
\item  $Y^\pm(t,x,\la)$  satisfy the $t-$ and $x^\pm-$equations;
\item $\det Y^\pm(t,x,\la) \equiv1, \quad \la\in\mathbb{R} $;
\item the map $(x,t)\longmapsto Y^\pm(t,x,\la)$  is smooth in $t$
and $x$;
\item vector column $Y^-[1](t,x,\la)$ has analytic continuation
$\overline Y[1](t,x,z)$ for $z\in\mathbb{C}_-$, which is continuous and bounded in $z\in\mathbb{C}_-\cup\mathbb{R}$ and
\begin{align*}
\overline Y[1](t,x,z)e^{\ii z t-\ii zx }=&
\begin{pmatrix}1\\0\end{pmatrix}+ O(z^{-1}), \quad
&z\to\infty;\end{align*}
\item vector column $Y^+[2](t,x,\la)$ has analytic continuation   $Y[2](t,x,z)$ for $z\in\mathbb{C}_+$, which is continuous and bounded in $z\in\mathbb{C}_+\cup\mathbb{R}$ and
\begin{align*}
Y[2](t,x,z)e^{-\ii z t+\ii zx }=&
\begin{pmatrix}0\\1\end{pmatrix}+ O(z^{-1}), &z\to\infty.
\end{align*}
\end{enumerate}

%------------------------%

\begin{lem}  \label{lem 2.4}
Let ${\mathcal E}(t,x)$  be smooth functions in the domain of its definition.
Then the function $\Psi(t,x,\la)$ has an integral representation:
\begin{equation}                                             \label{phi}
\Psi(t,x,\la)=e^{-\ii\la t\sigma_3}+\int_{-t}^t
L(t,\tau,x)e^{-\ii\la\tau\sigma_3}d\tau.
\end{equation}
The kernel $L(t,\tau,x)$ is smooth,  satisfies the symmetry
condition $ L^*(t,\tau,x)= \Lambda L(t,\tau,x) \Lambda^{-1}$ with
matrix $ \Lambda=\begin{pmatrix}0&1\\-1&0
\end{pmatrix}$  and
$
[\sigma_3, L(t,t,x)]=H(t,x)\sigma_3. \nonumber
$
\end{lem}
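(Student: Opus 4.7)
The plan is to substitute the Fourier-type ansatz directly into the $t$-equation (\ref{teq}) with $\Psi(0,x,\la)=I$, convert the resulting identity---which must hold for every $\la\in\mathbb{R}$---into a Goursat problem for the kernel $L(t,\tau,x)$ on the triangle $\{(t,\tau):|\tau|\le t,\ t\ge 0\}$, and solve that Goursat problem by Volterra iteration along characteristics. Here $x$ plays the role of a smooth parameter, and the smoothness of $L$ in $x$ will be inherited from that of $H(\cdot,x)$.

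First I would differentiate the ansatz in $t$ (the Leibniz rule produces boundary contributions at $\tau=\pm t$ together with an integral of $L_t$) and rewrite the $\ii\la\sigma_3\Psi$ term by integration by parts, using the identity
\[
\ii\la\sigma_3\,L(t,\tau,x)\,e^{-\ii\la\tau\sigma_3}=-\sigma_3 L(t,\tau,x)\sigma_3\,\partial_\tau e^{-\ii\la\tau\sigma_3},
\]
which follows from $\partial_\tau e^{-\ii\la\tau\sigma_3}=-\ii\la\sigma_3 e^{-\ii\la\tau\sigma_3}$ together with $\sigma_3^2=I$. After collecting terms, the remaining identity splits into three pieces: a boundary contribution at $\tau=t$ carrying $e^{-\ii\la t\sigma_3}$, a boundary contribution at $\tau=-t$ carrying $e^{\ii\la t\sigma_3}$, and an integral over $(-t,t)$. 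Read entry by entry, these are Fourier transforms of mutually singular measures on the $\tau$-line, so injectivity of the Fourier transform forces each coefficient to vanish. This gives the interior Goursat equation
\[
L_t(t,\tau,x)+\sigma_3 L_\tau(t,\tau,x)\sigma_3+H(t,x)L(t,\tau,x)=0,\qquad |\tau|<t,
\]
the boundary condition $L(t,t,x)-\sigma_3 L(t,t,x)\sigma_3+H(t,x)=0$ on $\tau=t$ (equivalently, after right-multiplication by $\sigma_3$, the claimed $[\sigma_3,L(t,t,x)]=H(t,x)\sigma_3$), and $L(t,-t,x)+\sigma_3 L(t,-t,x)\sigma_3=0$ on $\tau=-t$, i.e.\ the diagonal part of $L$ vanishes on the lower characteristic.

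Writing $L=L^d+L^o$ (diagonal plus off-diagonal parts) and using $\sigma_3 L^d\sigma_3=L^d$, $\sigma_3 L^o\sigma_3=-L^o$, together with the fact that $H$ is purely off-diagonal, the interior Goursat PDE decouples along the characteristics $\tau-t=\mathrm{const}$ and $\tau+t=\mathrm{const}$ into
\[
L^d_t+L^d_\tau=-HL^o,\qquad L^o_t-L^o_\tau=-HL^d,
\]
with boundary data $L^d(s,-s,x)=0$ and $L^o(s,s,x)=-H(s,x)/2$. Integrating along characteristics from these boundaries converts the Goursat system into an equivalent coupled Volterra integral system on the compact sub-triangle with apex $(t,\tau)$, and standard Picard iteration converges absolutely, producing a unique $L(t,\tau,x)$ smooth in all three variables. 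For the symmetry I would note that $\Lambda\sigma_3\Lambda^{-1}=-\sigma_3$ and $\Lambda H(t,x)\Lambda^{-1}=H^*(t,x)$, so that for real $\la$ both $\Psi^*(t,x,\la)$ and $\Lambda\Psi(t,x,\la)\Lambda^{-1}$ satisfy the same conjugated $t$-equation with the common initial value $I$ at $t=0$; uniqueness of the Cauchy problem forces them to coincide, and substituting the integral representation on both sides (using $\Lambda e^{-\ii\la t\sigma_3}\Lambda^{-1}=e^{\ii\la t\sigma_3}$), then invoking uniqueness of the Fourier-type expansion in $\la$ on $[-t,t]$, transfers the symmetry to the kernel, $\Lambda L(t,\tau,x)\Lambda^{-1}=L^*(t,\tau,x)$.

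The main obstacle is the Goursat step: one must correctly read off the two boundary conditions at $\tau=\pm t$ from the integration-by-parts identity, choose characteristic coordinates so that the resulting Volterra system is in standard form, and establish the uniform bounds on the compact sub-triangle needed to run the Picard iteration and obtain smooth dependence on all variables. Everything else---the integration by parts, the symmetry argument, and the transfer of smoothness from $H$ to $L$---is then routine, in close analogy with the triangular representation used in Lemma \ref{lem 2.2}.
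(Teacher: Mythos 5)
Your argument is correct and is precisely the standard Goursat/Volterra proof that the paper invokes when it writes ``The proof is well-known (cf.\ [BMK])'': derive the characteristic system and the conditions on $\tau=\pm t$ by substitution and integration by parts, solve by Picard iteration, and obtain the symmetry from uniqueness of the Cauchy problem for the conjugated $t$-equation. All the key identities check out, including $L(t,t,x)-\sigma_3L(t,t,x)\sigma_3+H=0$ being equivalent to $[\sigma_3,L(t,t,x)]=H(t,x)\sigma_3$ after right-multiplication by $\sigma_3$, so nothing further is needed.
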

The proof is well-known (sf.\cite{BMK}). This lemma gives that $\Psi(t,x,\la)$ has an analytic continuation $\Psi(t,x,z)$ for all $z\in\mathbb{C}$. Moreover
$\Psi(t,x,z)e^{\ii zt}$ is continuous, bounded in $\mathbb{C}_+\cup\mathbb{R}$ and
\be\label{psias1}
\Psi(t,x,z)e^{\ii zt}=\begin{pmatrix}
1&0\\0&e^{2\ii zt}
\end{pmatrix}+O(z^{-1}),\qquad |z|\to\infty,
\ee
and $\Psi(t,x,z)e^{-\ii zt}$ is continuous,  bounded in $\mathbb{C}_-\cup\mathbb{R}$ and
\be\label{psias2}
\Psi(t,x,z)e^{-\ii zt}= \begin{pmatrix}
e^{-2\ii zt}&0\\0&1
\end{pmatrix} +O(z^{-1}),\qquad |z|\to\infty.
\ee
%----------------------%

If $H(t,x)\equiv0$ and $F(t,x, \la)\equiv \sigma_3$, then
$x^\pm$-equation has the exact solution
$e^{\ii x\eta_\pm(\la)\sigma_3}$,  where $\eta_\pm(\la)$ is defined in (\ref{etapm}). These functions are the boundary values of the sectionally analytic function
\be\label{eta}
\eta(z)=z-\frac{1}{4}\int\limits_{-\infty}^\infty \frac{n(s)}{s-z}ds
\ee
for $z\in\mathbb{C}_\pm$.  Across the real line it has the jump:
\[
 \eta(\la+\ii0)-\eta(\la-\ii0)=-\frac{\pi\ii}{2} n(\la)\qquad \la\in\mathbb{R}.
\]
Furthermore, for $z=\la+\ii\nu$, 
\[
\Im\eta(z)=\nu\left(1-\D\frac{1}{4}\int\limits_{-\infty}^\infty\D \frac{n(s)\dd s}
{(s-\la)^2+\nu^2}\right)=\nu(1- I(\la,\nu)).
\]
If $n(\la)<0$ then ${\rm sign}\Im\eta(z)={\rm sign}\Im z$.
In the case $n(\la)>0$,
\begin{align*}
\D \frac{\Im\eta(z)}{\Im z}&> 0   \qquad \rm{if} \qquad   I(\la,\nu)<1;\\
\D \frac{\Im\eta(z)}{\Im z}&< 0   \qquad \rm{if} \qquad   I(\la,\nu)>1.
\end{align*}
For $n(\la)=\delta(\la)$ this integral equals to $1/4(\la^2+\nu^2)=1/|2z|^2$ and equation $  I(\la,\nu)=1$ defines the circle $|z|=1/2$.
In a general position when $n(\la)>0$, there exists a  curve, which is
the boundary of a domain $D$ containing the origin of the complex $z$-plane.
This  curve is defined by equation: $\Im\eta(z)=0$ when $\Im z\neq0$. It is symmetric with respect to real $\la$-axis, because $\eta^*(z^*)=\eta(z)$. We denote it as $\gamma\cup\overline\gamma$, where $\gamma$ lies in $\mathbb{C}_+$, and $\overline\gamma$ in $\mathbb{C}_-$. Thus there is such a domain  $D=D^+\cup D^-$  that
$${\rm sign}\Im\eta(z)=\begin{cases}-{\rm sign}\Im z, \qquad z\in D=D^+\cup D^-,\\
{\rm sign}\Im z, \qquad z\in(\mathbb{C}_+\setminus D^+)\cup(\mathbb{C}_-\setminus D^-).
\end{cases}$$
Domain $D$ may be bounded and unbounded as well. Indeed, let
$$
n(\la)=\begin{cases}\D\frac{1}{2\varepsilon},\qquad |\la|\le\varepsilon,\\\\
0, \qquad\quad |\la|>\varepsilon.  \end{cases}
$$
Then the curve $\gamma$ ($\Im\eta(z)=0, \Im z\neq0$) is described by equation:
$$
\la=\la(\nu)=\sqrt{\varepsilon^2-\nu^2+\D\frac{2\varepsilon\nu}{\tan8\varepsilon\nu}},
$$
where the square root is positive. If $\varepsilon$ is positive and sufficiently small then there exists $\delta>0$ such that
$$
\sqrt{\varepsilon^2-\nu^2+\D\frac{1}{4+\delta}}<\la(\nu)<
\sqrt{\varepsilon^2-\nu^2+\D\frac{1}{4}},
$$
i.e. finite curve $\gamma$ together with interval $[\la_-,\la_+]$ ($\la_\pm=\pm\sqrt{1/4+\varepsilon^2}$) bound domain $D^+$.
An example of the unbounded domain $D$ gives a physical and simple model with the Lorentzian line shape:
$$
n(\la)=\D\frac{l}{\pi}\frac{1}{\la^2+l^2},  \qquad l>0.
$$
Then
$$
\eta(z)=z-\D\frac{1}{4}\int\limits_{-\infty}^\infty\D\frac{n(\la)}{\la-z}d \la=
\begin{cases}z+\D\frac{1}{4(z+\ii l)}, \qquad z\in\mathbb{C}_+\\
z+\D\frac{1}{4(z-\ii l)}, \qquad z\in\mathbb{C}_-
\end{cases}
$$
and curve $\gamma$ is defined by equation:
$$
\la^2=\D\frac{\nu+l}{4\nu}-(\nu+l)^2.
$$
This equation yields that $\la_\pm(0)=\pm\infty$, i.e. $(\la_-,\la_+)=\mathbb{R}$, and $\nu_{\max}=(\sqrt{1+l^2}-l)/2$. Thus domain $D^+$ is unbounded along the real line because $\la(\nu)$ is unbounded when $\nu\to0$. Domain $D^-$ being complex conjugated to $D^+$ is unbounded also. 

If $F(0,x, \la)= \sigma_3$ for $x> L$ we put
\[
\hat H^\pm(x,\la)=H(0,x)-\ii\left(G_\pm(0,x,
\la)-g_\pm(\la)\sigma_3\right),\qquad 0\le x\le L,
\]
\[
g_\pm(\la)= \frac{1}{4}\int\limits_{-\infty}^\infty\D \frac{n(s)
ds}{s-\la\mp\ii 0}.
\]

\begin{lem}                                            \label{lem 2.5}
Let  $n(\la)$ be a H\"older function and fast decreasing as $\la\to\pm\infty$.
Let ${\mathcal E}(0,x)={\mathcal E}_0(x)$, ${\mathcal N}(0,x,\la)={\mathcal N}_0(x,\la)$,
$\rho(0,x,\la)=\rho_0(x,\la)$ be smooth for $\la\in\mathbb{R}$, $0\le x\le L$ and
fast decreasing as $x\to\infty$ when $L=\infty$.
Then  $w^\pm(x,\la)$ can be represented in the form:
\be\label{w} w^\pm(x,\la)= \hat w^\pm(x,\la)e^{\ii\eta_\pm(\la)x\sigma_3},\ee
where $\hat w^\pm(x,\la)$ is the unique solution of the Volterra integral
equation
\begin{equation}      \label{hatw}
\hat w^\pm(x,\la)=I -\int\limits_x^L
e^{\ii\eta_\pm(\la)(x-y)\sigma_3} \hat H^\pm(y,\la)\hat w^\pm(y,\la)
e^{-\ii\eta_\pm(\la)(x-y)\sigma_3}dy.
\end{equation}
The solutions $w^\pm(x,\la)$ are smooth in $t$ and $x$, related each other by the
formula:
$$
w^-(x,\la)=\sigma_2 w^{+*}(x,\la)\sigma_2.
$$
If $n(\la)<0$ then the first column $w^+[1](x,\la)$ has an analytic continuation $w[1](x,z)$ to the upper $\mathbb{C}_+$ complex half-plane and the second column $w^-[2](x,\la)$ has an analytic continuation $\overline w[2](x,z)$ to the lower $\mathbb{C}_-$ half-plane. The corresponding analytic vector-column
$w[1](x,z)e^{-\ii\eta(z) x}$ and $\overline w[2](x,z)e^{\ii\eta(z) x}$ are continuous in $\mathbb{C}_\pm\cup\mathbb{R}$ and have the following asymptotics:
\begin{itemize}
\item $w[1](x,z)e^{-\ii\eta(z) x}=\begin{pmatrix}
                                  1 \\
                                  0
\end{pmatrix} +O(z^{-1}), \qquad \Im\eta (z)\ge0, \quad z\to\infty;
$
\item $\overline w[2](x,z)e^{\ii\eta(z) x}=\begin{pmatrix}
                                  0 \\
                                  1
\end{pmatrix} +O(z^{-1}), \qquad \Im\eta(z)\le0, \quad z\to\infty.
$\end{itemize}
If $n(\la)>0$ then the first column $w^+[1](x,\la)$ and the second column $w^-[2](x,\la)$
have analytic continuations into domains $\mathbb{C}_+\setminus D^+$ and
$\mathbb{C}_-\setminus D^-$ respectively with the same asymptotic behavior, while
the first column $w^-[1](x,\la)$ and the second column $w^+[2](x,\la)$ have analytic (bounded) continuations $\overline w[1](x,z)$ and  $w[2](x,z)$ to the domains $D^-$ and $D^+$ respectively.
\end{lem}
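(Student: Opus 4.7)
The plan is to derive the Volterra equation \eqref{hatw} directly from the ansatz \eqref{w} and the $x^\pm$-equation \eqref{pmxeq} at $t=0$, solve it by Picard iteration, and then address symmetry and analytic continuation separately.

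First I would split $G_\pm(0,x,\la)=g_\pm(\la)\sigma_3+\bigl(G_\pm(0,x,\la)-g_\pm(\la)\sigma_3\bigr)$ so that the bracketed remainder is precisely $\ii^{-1}(H-\hat H^\pm)$. Substituting $w^\pm=\hat w^\pm e^{\ii\eta_\pm(\la)x\sigma_3}$ with $\eta_\pm=\la-g_\pm$ into \eqref{pmxeq}, the $\ii\la\sigma_3$-term absorbs $-\ii g_\pm\sigma_3$ and combines with the derivative of the exponential to produce the clean commutator form
\[
\hat w^\pm_x=\ii\eta_\pm(\la)[\sigma_3,\hat w^\pm]+\hat H^\pm(x,\la)\hat w^\pm.
\]
Conjugation by $e^{\mp\ii\eta_\pm x\sigma_3}$ turns this into an ordinary linear equation for $M=e^{-\ii\eta_\pm x\sigma_3}\hat w^\pm e^{\ii\eta_\pm x\sigma_3}$ with $M(L,\la)=I$; integrating from $L$ to $x$ and conjugating back yields exactly \eqref{hatw}. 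Standard successive approximations on $[x,L]$ (respectively on $[x,\infty)$ when $L=\infty$, using the fast decay of the data) give a unique smooth solution, and the H\"older hypothesis on $n$ propagates to continuity of $\hat w^\pm$ in $\la$.

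The symmetry $w^-=\sigma_2 w^{+*}\sigma_2$ is inherited from the identities $H^*=\Lambda H\Lambda^{-1}$, $\sigma_2\sigma_3\sigma_2=-\sigma_3$, $g_+^*(\la)=g_-(\la)$, and $\hat H^{-}=\sigma_2\hat H^{+*}\sigma_2$: applying complex conjugation and the sandwich $\sigma_2(\cdot)\sigma_2$ to \eqref{hatw} for $w^+$ produces the corresponding Volterra equation for $w^-$, matched at $x=L$ by $e^{\ii L\eta_-\sigma_3}=\sigma_2(e^{\ii L\eta_+\sigma_3})^*\sigma_2$; uniqueness then closes the argument.

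For analyticity, the key point is that $G_\pm$ and $\eta_\pm$ are the $\pm$-boundary values of functions analytic in $\mathbb C_\pm$, so \eqref{hatw} extends verbatim to any $z$ in that half-plane where the off-diagonal exponential in the kernel remains bounded for $y\ge x$. A direct computation shows that the Volterra equation for the first column $\hat w^+[1]$ carries the factor $e^{2\ii\eta(z)(y-x)}$, bounded precisely when $\Im\eta(z)\ge 0$. When $n(\la)<0$ one has $\operatorname{sign}\Im\eta(z)=\operatorname{sign}\Im z$ everywhere off $\mathbb R$, so the Neumann series converges uniformly on compact subsets of $\mathbb C_+$ and defines the analytic extension $w[1](x,z)$; the second column of $w^-$ is treated symmetrically. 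When $n(\la)>0$ the same estimate only covers $\mathbb C_\pm\setminus D^\pm$, but inside $D^\pm$ the sign of $\Im\eta$ is reversed, which is exactly what is needed for the Volterra equation of $w^-[1]$ (with $\eta$ continued from below, carrying the same factor $e^{2\ii\eta(z)(y-x)}$) to have a bounded kernel on $D^-$, producing $\overline w[1]$ there; analogously $w^+[2]$ extends into $D^+$. The $O(z^{-1})$ asymptotics are then obtained by iterating \eqref{hatw} once and using that $\hat H^\pm(y,z)=O(z^{-1})$ as $z\to\infty$, which is immediate from the Cauchy representation of $G_\pm$ together with the decay of $n$.

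The main obstacle is the analyticity step in the case $n(\la)>0$: one must justify uniform integrability of the Volterra kernels all the way up to the curve $\gamma\cup\overline\gamma$ where $\Im\eta$ vanishes, and verify that the natural extension of $w^+[1]$ into $\mathbb C_+\setminus D^+$ and the cross-extension $\overline w[1]$ into $D^-$ are genuinely distinct analytic objects rather than a single function on a larger domain. Everything else is essentially classical AKNS bookkeeping once the commutator equation for $\hat w^\pm$ has been exposed.
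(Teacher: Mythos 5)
Your proposal is correct and follows essentially the same route as the paper, whose proof of this lemma is a one-line reference to the method of Lemma 2.3: reduce to the Volterra integral equation, solve by successive approximations, obtain the symmetry $w^-=\sigma_2 w^{+*}\sigma_2$ from the reduction of the coefficients, and read off analyticity and the $O(z^{-1})$ asymptotics from the analytic continuation of $G_\pm$ and $\eta_\pm$ together with the sign of $\Im\eta(z)$ relative to the domains $D^\pm$. Your derivation of the commutator form for $\hat w^\pm$ and the identification of the exponential factor $e^{2\ii\eta(z)(y-x)}$ governing column-wise boundedness is exactly the bookkeeping the paper leaves implicit.
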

The proof can be done by the same way as for Lemma 2.3.
%----------------------%
%%%%%%%%%%%%%%%%%%%%%%%%%%%%%%%

Lemmas 2.4, 2.5 and formulas (\ref{Z}), (\ref{phi}), (\ref{psias1}), (\ref{psias2}), (\ref{w}), (\ref{hatw})
imply the following properties of the matrix $Z^\pm(t,x,\la)=$
$(Z^\pm[1](t,x,\la), Z^\pm[2](t,x,\la))$:
\begin{enumerate}
\item $Z^\pm(t,x,\la)$  satisfy the $t-$ and $x^\pm-$equations;
\item $\det Z^\pm(t,x,\la) \equiv1, \quad \la\in\mathbb{R} $;
\item the map $(x,t)\longmapsto Z^\pm(t,x,\la)$ is smooth in $t$
and $x$;
\item vector column $Z^+[1](t,x,\la)$ has an analytic continuation   $Z[1](t,x,z)$ for $z\in\mathbb{C}_+$ if $n(\la)<0$ ($z\in\mathbb{C}_+\setminus D^+$ if $n(\la)>0$),
    which is continuous in the closure of $\mathbb{C}_+$ ($\mathbb{C}_+\setminus D^+$)
    and
    $$
    Z[1](t,x,z)e^{\ii zt-\ii\eta(z)x}=  \begin{pmatrix}1\\0\end{pmatrix}+ O(z^{-1}), \quad  z\to\infty;
    $$
\item vector column $Z^-[2](t,x,\la)$ has an analytic continuation   $\overline Z[2](t,x,z)$ for $z\in\mathbb{C}_-$ if $n(\la)<0$ ($z\in\mathbb{C}_-\setminus D^-$ if $n(\la)>0$), which is continuous in the closure of $\mathbb{C}_-$ ($\mathbb{C}_-\setminus D^-$) and
    $$
    \overline Z[2](t,x,z)e^{-\ii zt+\ii\eta(z)x}=  \begin{pmatrix}0\\1\end{pmatrix}+ O(z^{-1}), \quad  z\to\infty.
    $$

If $n(\la)>0$ then

\item  vector column $Z^-[1](t,x,\la)$ has an analytic continuation   $\overline Z[1](t,x,z)$ to the domain $D^-$, and vector column $Z^+[2](t,x,\la)$ has an analytic continuation $Z[2](t,x,z)$  to the domain $D^+$, where the both analytic continuations are continuous up to the boundary of $D^+$ and $D^-$ respectively.
\end{enumerate}
To prove (4) and (5) we use asymptotic relations (\ref{psias1}), (\ref{psias2}) and
following one (where signs $\pm$ are omitted for a convenience)
$$
w(x,\la)\sim\begin{pmatrix}e^{\ii x\eta(\la)}&0\\0&e^{-\ii x\eta(\la)}\end{pmatrix}+
\begin{pmatrix}\chi_{11}(x,\la)e^{\ii x\eta(\la)}&\chi_{12}(x,\la)e^{-\ii x\eta(\la)}\\\chi_{21}(x,\la)e^{\ii x\eta(\la)}&\chi_{22}(x,\la)e^{-\ii x\eta(\la)}\end{pmatrix}.
$$
Here $\chi_{11}(x,\la), \,  \chi_{21}(x,\la) = \ord(\la^{-1})+\ord(\la^{-1}e^{2\ii L\la})$ and $\chi_{12}(x,\la), \,  \chi_{22}(x,\la) = \ord(\la^{-1})+\ord(\la^{-1}e^{-2\ii L\la})$ as $\la\to\pm \infty$ ($0\le x\le L<\infty$). If $L=\infty$  then $\chi_{ij}(x,\la)=\ord(\la^{-1})$ as $\la\to\pm \infty$. Indeed, for the first vector column, we have
$$
Z[1](t,x,z)e^{\ii zt-\ii\eta(z)x}\sim\left(\begin{pmatrix}
1&0\\0&e^{2\ii zt} \end{pmatrix}+O(z^{-1})\right)\left(
\begin{pmatrix}1\\0\end{pmatrix}+\begin{pmatrix}\chi_{11}(x,z)\\\chi_{21}(x,z)  \end{pmatrix} \right)=\begin{pmatrix}1\\0\end{pmatrix}+O(z^{-1}).
$$
For the second vector column there is
$$
\overline Z[2](t,x,z)e^{-\ii zt+\ii\eta(z)x}\sim\left[\begin{pmatrix}
e^{-2\ii zt} &0\\0&1\end{pmatrix}+O(z^{-1})\right]\left[
\begin{pmatrix}0\\1\end{pmatrix}+\begin{pmatrix}\chi_{12}(x,z)\\\chi_{22}(x,z)  \end{pmatrix} \right]=\begin{pmatrix}0\\1\end{pmatrix}+ O(z^{-1}).
$$

Due to analytic continuations of the first vector column  to the upper half plane and second vector column to the lower half plane we have $(\chi_{11}(x,z), \,  \chi_{21}(x,z)) = \ord(z^{-1}) $ for $\Im z\ge0$ and
$(\chi_{21}(x,z), \,  \chi_{22}(x,z)) = \ord(z^{-1}) $ for $\Im z\le0$ as $z\to\infty$.

Pairs of vectors ($\overline Y[1](t,x,z)$, $\overline Z[2](t,x,z)$) and
($Z[1](t,x,z)$, $Y[2](t,x,z)$) will alow to define below suitable matrix
Riemann-Hilbert problem in the case $n(\la)<0$. In the case $n(\la)>0$ we will use
additionally more vectors ($\overline Z[1](t,x,z)$, $\overline Y[1](t,x,z)$) and
($Y[2](t,x,z)$, $Z[2](t,x,z)$).

At the end of this section we consider the problem on  whole $t$-line studied in \cite{AKN} and \cite{GZM}, where the Marchenko integral equations were used. We would like to formulate below (in section 4) the corresponding matrix RH problem. Let  ${\mathcal E}(t,x)$ is vanishing as $t\to-\infty$ in such a way that there exists the Jost solution 
$$
\hat\Psi(t,x,\la)=e^{-\ii\la t\sigma_3}+\int\limits_{-\infty}^t \hat L(t,\tau,x)
e^{-\ii\la\tau\sigma_3}d\tau
$$
of $t$-equation (\ref{teq}). Let ${\mathcal N}(t,x,\la)-\sigma_3$ and $\rho(t,x,\la)$ are also vanishing as $t\to-\infty$. In this case we use the following pair of compatible solutions:
$Y^\pm(t,x,\la)$ are the same as above, while
$$
Z^\pm(t,x,\la)=\hat\Psi(t,x,\la)e^{\ii x\eta_\pm(\la)\sigma_3}.
$$
Due to the Lemma 2.1 $Z^\pm(t,x,\la$) is compatible solution of (\ref{teq}) and
(\ref{pmxeq}) because, when ${\mathcal N}(t,x,\la)=\sigma_3$ and $\rho(t,x,\la)=0$,
$x^\pm$-equation takes the form $E_x=\ii\eta_\pm(\la)\sigma_3 E$ and we put
$E^\pm(x,\la)=e^{\ii x\eta_\pm(\la)\sigma_3}$. Further, vector column $Z^+[1](t,x,\la)$
has an analytic continuation $Z[1](t,x,z)$ for $z\in\mathbb{C}_+$,
and vector column $Z^-[2](t,x,\la)$ has an analytic continuation
$\overline Z[2](t,x,z)$ for $z\in\mathbb{C}_-$. They satisfy the properties (4)
and (5) above for any choice of the sign of the function $n(\la)$.
%%%%%%%%%%%%%%%%%%%%%%%%%%%%%%%%%%%%
\section{Transition matrices and spectral functions}
\setcounter{equation}{0}

Now we consider a dependence between solutions of the AKNS equations, corresponding transition matrices and spectral functions. First of all, since
$ Y^\pm(t,x,\la)$ and    $Z^\pm(t,x,\la)$  are solutions
of the $t-$ and $x^\pm-$equations (\ref{teq}), (\ref{pmxeq}), they
are linear dependent. So there exist
transition matrices $T^\pm(\la)$, independent of $x$ and $t$, such that
\begin{equation}
\label{sc}Y^\pm(t,x,\la)=Z^\pm(t,x,\la) T^\pm(\la), \qquad \Im\la=0.
\end{equation}
They are equal to
\[
T^\pm(\la)= (w^\pm(0,\la))^{-1}\Phi(0,\la), \qquad \Im\la=0
\]
and have the following structure:
\[
T^\pm(\la)= \begin{pmatrix}\overline a^\pm(\la)&b^\pm(\la)\\
-\overline b^\pm(\la)&a^\pm(\la)\end{pmatrix}.
\]

Matrices $w^\pm(0,\la)$ have the similar structure
$$
w^\pm(0,\la)=\hat w^\pm(0,\la)=\begin{pmatrix}
\alpha^\pm(\la)&-\overline\beta^\pm(\la)\\\beta^\pm(\la)&\overline\alpha^\pm(\la)\end{pmatrix}
$$
and define the spectral functions of $x^\pm$-equation. Since coefficients of $x^\pm$-equation are taken for the fixed time $t=0$, the spectral functions are
uniquely defined by given initial functions ${\mathcal E}(0,x)$,
$\rho(0,x,\la)$ and ${\mathcal N}(0,x,\la)$. They are not independent because the solutions $w^\pm(x,\la)$ of the $x^\pm$-equations
are related each other by $w^-(x,\la)=\sigma_2 w^{+*}(x,\la)\sigma_2,$
that yields the following reductions:
\begin{align*}
\overline\alpha^\pm(\la)=&\alpha^{\mp\,*}(\la),\\
\overline\beta^\pm(\la)=&-\beta^{\mp\,*}(\la).
\end{align*}
The matrix
$$\Phi(0,\la)=\begin{pmatrix}
\overline A(\la)&B(\la)\\-\overline B(\la)&A(\la)\end{pmatrix}$$
gives the spectral functions of the $t$-equation with $x=0$. They are uniquely
defined by the boundary condition ${\mathcal E}(t,0)$. Analogous reduction condition
$\Phi(t,\la)=\sigma_2 \Phi^{+*}(t,\la)\sigma_2$ means
$$
\overline A(\la)=A^*(\la),\quad \overline B(\la)=B^*(\la).
$$
Finally we have that the spectral functions, generated by the transition matrix $T^\pm(\la)$,
are defined by the initial and boundary conditions. This matrix satisfies the analogous
reduction:
$$
T^-(\la)=\sigma_2 T^{+*}(\la)\sigma_2,
$$
which is equivalent to the following ones:\begin{align}\label{ab}
\overline a^{\,\pm}(\la)=&a^{\mp\,*}(\la),\nonumber\\
\overline b^{\,\pm}(\la)=&b^{\mp\,*}(\la).
\end{align}
Due to A.S.Fokas \cite{F1} we call all of these functions as the spectral functions.

Entries of the matrix $T^\pm(\la)$ are
\begin{align*}
\overline a^{\,\pm}(\la)=&\det[Y^\pm[1](t,x,\la),Z^\pm[2](t,x,\la)] \\
\overline b^{\,\pm}(\la)=&\det[Y^\pm[1](t,x,\la),Z^\pm[1](t,x,\la)]\\
b^\pm(\la)=&\det[Y^\pm[2](t,x,\la),Z^\pm[2](t,x,\la)] \\
a^\pm(\la)=&\det[Z^\pm[1](t,x,\la),Y^\pm[2](t,x,\la)].
\end{align*}
If $n(\la)<0$ these relations show that  $a^+(\la)$ has an analytic continuation $a(z)$ for $z\in\mathbb{C}_+$, and  $a^-(\la)$ has an analytic continuation
$\overline a(z)$ for $z\in\mathbb{C}_-$ ($\overline a(z)=a^*(z^*)$).
Functions $b^\pm(\la)$ and $\overline b^{\,\pm}(\la)$ are defined for
$\la\in\mathbb{R}$ only. Determinant of $T^\pm(\la)\equiv1$ for $\Im\la=0$. The
spectral functions have the following asymptotics:
\begin{align*}
\overline a(z)=&1+O(z^{-1}), \qquad \Im z\le0, \qquad   |z|\to\infty.\\
b^\pm(\la)=&O(\la^{-1}),\qquad\qquad\qquad\qquad\qquad\la\to\pm\infty,\\
a(z)=&1+O(z^{-1}),\qquad \Im z\ge0, \qquad   |z|\to\infty.
\end{align*}

If   $a(z)$ have zeroes $z_j\in\mathbb{C}_+$ ($j=1,2,....,p$) then
$$a(z_j)=\det[Z[1](t,x,z_j),Y[2](t,x,z_j)]=0.$$ Hence vector-columns of the determinant are linear dependent:
\be\label{gj1}
Y[2](t,x,z_j)=\gamma_jZ[1](t,x,z_j), \qquad \gamma_j=\D \frac{B(z_j)}{\alpha(z_j)}=
\D \frac{A(z_j)}{\beta(z_j)}.
\ee
At the conjugated points $z^*_j\in\mathbb{C}_-$ ($j=1,2,....,p$)
$$\overline a(z^*_j)=\det[\overline Y[1](t,x,z^*_j),\overline Z[2](t,x,z^*_j)]=0.$$ Therefore
\be\label{gj2}
\overline Y[1](t,x,z^*_j)=\overline\gamma_j\overline Z[2](t,x,z^*_j), \qquad \overline\gamma_j=\D \frac{B^*(z^*_j)}
{ \alpha^*(z^*_j)}=\D \frac{A^*(z^*_j)}{\beta^*(z^*_j)}=\gamma^*_j.
\ee

If $n(\la)>0$ we have another analytic properties. Namely,
$a^+(\la)$ has an analytic continuation $a(z)$ for
$z\in\mathbb{C}_+\setminus D^+$, and  $a^-(\la)$ has an analytic continuation
$\overline a(z)$ for $z\in\mathbb{C}_-\setminus D^-$ ($\overline a(z)=a^*(z^*)$).
The function $\overline b^{\,-}(\la)$ has an analytic continuation $\overline b(z)$ for $z\in D^-$, and  $b^+(\la)$ has an analytic continuation
$b(z)$ for $z\in D^+$ (and $ b^*(z^*)=\overline b(z)$).
The spectral functions have the same asymptotics as above.

If   $a(z)$ have zeroes $z_j\in\mathbb{C}_+\setminus D^+$ ($j=1,2,....,p$) then
$$a(z_j)=\det[Z[1](t,x,z_j),Y[2](t,x,z_j)]=0.$$ Hence vector-columns of the determinant are linear dependent:
\be\label{gj3}
Y[2](t,x,z_j)=\gamma_jZ[1](t,x,z_j), \qquad \gamma_j=\D \frac{B(z_j)}{\alpha(z_j)}=
\D \frac{A(z_j)}{\beta(z_j)}.
\ee
At the conjugated points $z^*_j\in\mathbb{C}_-\setminus D^-$ ($j=1,2,....,p$)
$$\overline a(z^*_j)=\det[\overline Y[1](t,x,z^*_j),\overline Z[2](t,x,z^*_j)]=0.$$ Therefore
\be\label{gj4}
\overline Y[1](t,x,z^*_j)=\overline\gamma_j\overline Z[2](t,x,z^*_j), \qquad \overline\gamma_j=\D \frac{B^*(z^*_j)}
{ \alpha^*(z^*_j)}=\D \frac{A^*(z^*_j)}{\beta^*(z^*_j)}=\gamma^*_j.
\ee
Being analytic in the domain $D_+$ the function $b(z)$ can have zeroes at the points $\hat z_j\in D_+$ ($j=1,2,....,q$). Hence
\be\label{gj5}
Y[2](t,x,\hat z_j)=\hat \gamma_jZ[2](t,x,\hat z_j), \qquad \hat\gamma_j=
\D \frac{B(\hat z_j)}{\alpha(\hat z_j)}=
\D \frac{A(\hat z_j)}{\beta(\hat z_j)}.
\ee
At the conjugated points $\hat z^*_j\in D^-$ ($j=1,2,....,q$)
\be\label{gj6}
\overline Y[1](t,x,\hat z^*_j)=\hat{\overline\gamma_j}\overline Z[1](t,x,\hat z^*_j),
\qquad \hat{\overline\gamma_j}=\D \frac{B^*(\hat z^*_j)}
{ \alpha^*(\hat z^*_j)}=\D \frac{A^*(\hat z^*_j)}{\beta^*(\hat z^*_j)}=\gamma^*_j.
\ee
%%%%%%%%%%%%%%%%%%%%%%%%%%%%%%%%%%%%%%%%%%%%%%%%%%%%%%%%%%%%%%%%
\section{ Matrix Riemann-Hilbert problems}
\setcounter{equation}{0}

In this section we give a reconstruction of the solution of the
mixed problem to the MB equations in terms of  the  associated matrix
Riemann-Hilbert problem.

In the case $n(\la)<0$ we put
\begin{align} \nonumber
   \hat\Phi(t,x,z)=\begin{pmatrix}
     Z[1](t,x,z) &\D \frac{Y[2](t,x,z)}{a(z)}
   \end{pmatrix}, & \quad z\in\mathbb{C}_+,\\\label{hatPhi}\\
   =\begin{pmatrix}
     \D \frac{\overline Y[1](t,x,z)}{{\overline a}(z)}&\overline Z[2](t,x,z)
   \end{pmatrix},& \quad z\in\mathbb{C}_-.\nonumber
\end{align}
Scattering relations (\ref{sc}) gives:  \[
\D \frac{Y^-[1](t,x,\la)}{\overline a^{\,
-}(\la)}=Z^-[1](t,x,\la)- \overline r^{\, -}(\la)Z^-[2](t,x,\la),
\]
\[
\D \frac{Y^+[2](t,x,\la)}{a^+(\la)}=r^+(\la)Z^+[1](t,x,\la)+
Z^+[2](t,x,\la),
\]
where
$$
\overline r^{\, -}(\la)=\frac{\overline  b^{\, -}(\la)}{\overline a^{\,-}(\la)}=
\D\frac{\alpha^-(\la)\overline B(\la)+\beta^+(\la)\overline A(\la)}
{\overline{\alpha}^-(\la)\overline A(\la)-\overline{\beta}^-(\la)\overline B(\la)},
\qquad r^+(\la)=\frac{b^+(\la)}{a^+(\la)}=
\D\frac{\overline{\alpha}^+(\la)B(\la)+\overline{\beta}^+(\la)A(\la)}
{\alpha^+(\la)A(\la)-\beta^+(\la)B(\la)}.
$$
Using these relations we find that
$$\det\hat\Phi(t,x,\la+\ii0)=\det\hat\Phi(t,x,\la-\ii0)\equiv1, \qquad \Im\la=0.$$
Since $\hat\Phi_+(t,x,\la):=\hat\Phi(t,x,\la+\ii0)$ and
$\hat\Phi_-(t,x,\la):=\hat\Phi(t,x,\la-\ii0)$ satisfy $t$-equation, we
have
$$
\hat\Phi_-(t,x,\la)=\hat\Phi_+(t,x,\la)J_0(x,\la),
$$
where unimodular matrix
$$
J_0(x,\la)=\begin{pmatrix}
     Z^+[1](0,x,\la) &\D \frac{Y^+[2](0,x,\la)}{a^+(\la)}
   \end{pmatrix}^{-1}\begin{pmatrix}
     \D \frac{Y^-[1](0,x,\la)}{\overline a^{\, -}(\la)}&Z^-[2](0,x,\la)
   \end{pmatrix}.
$$
All entries
\begin{align*}
Z^+[1](0,x,\la)&=W^+(0,x,\la)w^+[1](0,\la), \\
Y^+[2](0,x,\la)&=W^+(0,x,\la)\Phi[2](0,\la),\\
Y^-[1](0,x,\la)&=W^-(0,x,\la)\Phi[1](0,\la),\\
Z^-[2](0,x,\la)&=W^-(0,x,\la)w^-[2](0,\la).
\end{align*}
of these matrices are known by initial and boundary conditions. Indeed, since $W^\pm(0,x,\la)=w^\pm(x,\la)(w^\pm(0,\la))^{-1}$ then
\begin{align*}
Z^+[1](0,x,\la)&= w^+(x,\la) \begin{pmatrix}
                               1 \\
                               0
                             \end{pmatrix}, \\
Y^+[2](0,x,\la)&=w^+(x,\la) \begin{pmatrix}
                               b^+(\la) \\
                               a^+(\la)
                             \end{pmatrix}, \\
Y^-[1](0,x,\la)&=w^-(x,\la) \begin{pmatrix}
                               \overline{a}^-(\la)\\
                               \overline{b}^-(\la)
                             \end{pmatrix}, \\
Z^-[2](0,x,\la)&=w^-(x,\la) \begin{pmatrix}
                               0 \\
                               1
                             \end{pmatrix}.
\end{align*}
The matrix $ J_0(x,\la)$ can be written in the form:
\be\label{J0} J_0(x,\la)=(S^+(\la))^{-1}(w^+(x,\la))^{-1}w^-(x,\la)S^-(\la)\ee
where
\begin{align*} S^+(\la)=&\begin{pmatrix}
                   1 & r^+(\la) \\
                   0 & 1 \\
                 \end{pmatrix},\qquad r^+(\la)=\D\frac{b^+(\la)}{a^+(\la)},\\
   S^-(\la)=&\begin{pmatrix}
                   1 & 0\\
                   -\overline {r}^-(\la)&1
                 \end{pmatrix},\qquad r^-(\la)=\D \frac{\overline{b}^-\la)}{\overline a^{\, -}(\la)}
\end{align*}
are spectral matrices defined by initial ${\mathcal E}_{in}(t)$ and
boundary  ${\mathcal E}_{0}(x)$, $\rho_0(x,\la)$, ${\mathcal N}_0(x,\la)$
conditions. They are unimodular matrices. Matrix $w^\pm(x,\la)$ is the Jost solution of $x^\pm$-equation (with $t=0$):
\begin{align}\label{xpm0}
   \frac{d}{dx}w^\pm(x,\la)=&(\ii\la\sigma_3-\ii
   G_\pm(0,x,\la)+H(0,x))w^\pm(x,\la),\\
    &w^\pm(L,\la)e^{-\ii L\eta_\pm(\la)\sigma_3}=I. \nonumber
\end{align}
It is defined by given boundary conditions only. The jump matrix $J_0(x,\la)$
can be rewritten in the form:
$$
J_0(x,\la)=(K^+(x,\la))^{-1}K^-(x,\la),
$$
where $K^\pm(x,\la)$ satisfies equation (\ref{xpm0}) under
condition:
$$
 K^\pm(L,\la)=e^{\ii L\eta_\pm\sigma_3}S^\pm(\la).
$$
The following reductions
\begin{align*}
S^-(\la)=&\sigma_2 S^{+*}(\la)\sigma_2,\\
w^-(x,\la)=&\sigma_2 w^{+*}(x,\la)\sigma_2\\
K^-(x,\la)=& \sigma_2 K^{+\, *}(x,\la)\sigma_2
\end{align*}
are fulfilled. These reductions are equivalent to the following
ones:
\begin{align*}[S^+(\la)]^{-1}=&[S^-(\la)]^\dag,\\
[w^+(x,\la)]^{-1}=&[w^-(x,\la)]^\dag,\\
[K^+(x,\la)]^{-1}=&[K^-(x,\la)]^\dag
\end{align*}
where $\dag$ means Hermitian conjugation.
\begin{lem}
For any $x\in\mathbb{R}_+$ and $\la\in\mathbb{R}$ the jump matrix
$$
J_0(x,\la)=(K^-(x,\la))^{\dag}K^-(x,\la)
$$
is positive defined.
\end{lem}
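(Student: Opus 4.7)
The strategy is to combine two ingredients that are essentially already on the table: the factorization $J_0=(K^+)^{-1}K^-$ and the Hermitian reduction $(K^+)^{-1}=(K^-)^\dag$. Putting these together yields the claimed form, and then the positive definiteness follows from elementary Gram-matrix reasoning once we check that $K^-$ is invertible.

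\textbf{Step 1: Rewriting $J_0$.} Starting from the definition $J_0(x,\la)=(K^+(x,\la))^{-1}K^-(x,\la)$ obtained above, I would apply the Hermitian reduction
\[
[K^+(x,\la)]^{-1}=[K^-(x,\la)]^\dag
\]
that is listed immediately before the lemma. This immediately gives $J_0(x,\la)=(K^-(x,\la))^\dag K^-(x,\la)$, which is the stated factorization.

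\textbf{Step 2: Invertibility of $K^-(x,\la)$.} The matrix $K^-(x,\la)$ solves the linear ODE (\ref{xpm0}) in $x$ with the coefficient matrix $\ii\la\sigma_3-\ii G_-(0,x,\la)+H(0,x)$. Each of these three summands is traceless ($\sigma_3$ and $H$ by inspection; $G_\pm$ because the matrix $F$ appearing in its defining integral is traceless). By Liouville's formula, $\det K^-(x,\la)$ is therefore constant in $x$, so
\[
\det K^-(x,\la)=\det K^-(L,\la)=\det\!\bigl(e^{\ii L\eta_-(\la)\sigma_3}\bigr)\,\det S^-(\la)=1,
\]
using that $S^-(\la)$ is unimodular and the exponential of a traceless matrix has determinant one. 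In particular $K^-(x,\la)$ is invertible for every $x\in\mathbb{R}_+$ and $\la\in\mathbb{R}$.

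\textbf{Step 3: Positive definiteness.} For any nonzero vector $v\in\mathbb{C}^2$, invertibility of $K^-(x,\la)$ gives $K^-(x,\la)v\neq 0$, hence
\[
v^\dag J_0(x,\la)v=v^\dag[K^-(x,\la)]^\dag K^-(x,\la)v=\|K^-(x,\la)v\|^2>0.
\]
Since $J_0(x,\la)$ is manifestly Hermitian from the factorization in Step~1, this establishes positive definiteness.

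The whole argument is essentially bookkeeping once the Hermitian reduction and the unimodularity of $K^-$ are in hand. The only conceptual point that needs care is justifying $\det K^-(x,\la)\equiv 1$; once the coefficient matrix of (\ref{xpm0}) is seen to be traceless (which rests on the tracelessness of $F$, and hence of $G_\pm$), everything else is automatic, so I expect no substantial obstacle.
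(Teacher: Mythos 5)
Your proof is correct and follows essentially the same route as the paper: the paper also deduces positivity from the Gram-matrix form $(J_0\xi,\xi)=\|K^-\xi\|^2$ together with $\det K^-(x,\la)\equiv 1$, which forces $K^-\xi\neq 0$ for $\xi\neq 0$. The only difference is that you explicitly justify $\det K^-\equiv 1$ via tracelessness of the coefficient matrix and Liouville's formula, a step the paper simply asserts.
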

\begin{proof} Indeed, scalar product
$$(J_0(x,\la)\xi, \xi)=(K^-(x,\la)\xi, K^-(x,\la)\xi)$$
is positive for any $\xi\in\mathbb{C}^2$, $\xi\neq0$. Suppose the contrary, i.e.  there exists $\xi_0=\xi_0(x,\la)\neq0$ such that $K^-(x,\la)\xi_0(x,\la)=0$. Since $\det K^-(x,\la)\equiv1$,  vector function $\xi_0(x,\la)$ is equal zero identically. Hence $(J_0(x,\la)\xi, \xi)>0$.
\end{proof}
\begin{lem}
For any fixed $x\in\mathbb{R}_+$ the jump matrix $J_0(x,\la)$ has the following asymptotic behavior:
$$
J_0(x,\la)=I+\ord(\la^{-1}), \qquad \la\to\pm\infty.
$$
\end{lem}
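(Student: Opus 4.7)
The plan is to exploit the factorization
\begin{equation*}
J_0(x,\la) = (S^+(\la))^{-1}(w^+(x,\la))^{-1}w^-(x,\la)S^-(\la)
\end{equation*}
from \eqref{J0} and verify that each of the three factors on the right equals $I + \ord(\la^{-1})$ as $\la\to\pm\infty$; the claim then follows by multiplication. For the outer factors, since $r^\pm(\la) = b^\pm(\la)/a^\pm(\la)$ and the spectral asymptotics recorded in Section~3 give $b^\pm(\la) = \ord(\la^{-1})$ and $a^\pm(\la) = 1 + \ord(\la^{-1})$ on the real line, both triangular matrices $S^\pm(\la)$ are manifestly $I+\ord(\la^{-1})$.

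The heart of the argument is the middle factor. Using Lemma 2.5 to write $w^\pm(x,\la) = \hat w^\pm(x,\la)\,e^{\ii\eta_\pm(\la)x\sigma_3}$, I would first establish $\hat w^\pm(x,\la) = I + \ord(\la^{-1})$ by analyzing the Volterra equation \eqref{hatw} term by term. The diagonal component of the potential $\hat H^\pm(y,\la)$ is $\ord(\la^{-1})$ by expanding the Cauchy kernel at infinity (using that $F(0,y,s) - \sigma_3$ is compactly supported in $s$ and that $n(s)$ is fast-decreasing), hence it contributes $\ord(\la^{-1})$ after integration in $y$. The off-diagonal component $H(0,y)$ is only $\ord(1)$, but after conjugation by $e^{\ii\eta_\pm(\la)(x-y)\sigma_3}$ its entries acquire the rapidly oscillating factor $e^{\pm 2\ii\eta_\pm(\la)(x-y)}$; a single integration by parts in $y$, combined with the asymptotic $\eta_\pm(\la) = \la + \ord(\la^{-1})$, produces the required decay. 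Iterating the Volterra kernel contributes only higher-order terms.

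Finally, I would combine the $\hat w^\pm$ estimate with the exponential factors:
\begin{equation*}
(w^+)^{-1}w^- = e^{-\ii\eta_+(\la)x\sigma_3}\bigl(I + \ord(\la^{-1})\bigr)e^{\ii\eta_-(\la)x\sigma_3},
\end{equation*}
using the jump identity $\eta_-(\la)-\eta_+(\la) = \tfrac{\pi\ii}{2}n(\la)$ (a Sokhotski--Plemelj consequence of \eqref{eta}) to rewrite the leading term as $e^{-\pi n(\la)x\sigma_3/2} = I + \ord(n(\la)) = I + \ord(\la^{-1})$. The main obstacle is verifying that the conjugation of the $\ord(\la^{-1})$ remainder preserves the bound: diagonal entries acquire the bounded factor $e^{\mp\pi n(\la)x/2}$, while off-diagonal entries pick up purely oscillatory factors $e^{\pm\ii(\eta_+(\la)+\eta_-(\la))x}$. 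Boundedness in $\la$ of the latter on the real line, which is what makes the estimate close up, reduces to $\Im\eta_\pm(\la) = \mp\pi n(\la)/4$ tending to zero at infinity, and this follows from the H\"older and fast-decreasing hypothesis imposed on $n$ in Lemma 2.5.
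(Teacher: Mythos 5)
Your argument is correct, but it follows a genuinely different route from the paper. You work directly with the factorization $J_0=(S^+)^{-1}(w^+)^{-1}w^-S^-$ and estimate each factor: $S^\pm=I+\ord(\la^{-1})$ from the spectral asymptotics $a^\pm=1+\ord(\la^{-1})$, $b^\pm=\ord(\la^{-1})$, and $\hat w^\pm=I+\ord(\la^{-1})$ from the Volterra equation (this last point is in fact already recorded in the paper as the $\chi_{ij}(x,\la)=\ord(\la^{-1})+\ord(\la^{-1}e^{\pm2\ii L\la})$ asymptotics following Lemma 2.5, so you could simply cite it rather than redo the integration by parts; if you do redo it, note that the higher Volterra iterates need the same oscillatory-integral argument, since a product of two $\ord(1)$ off-diagonal factors is not automatically small). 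The paper instead differentiates $J_0=(K^-)^\dag K^-$ in $x$, uses $H^\dag=-H$ and $G_-^\dag=G_+$ to get $\partial_x J_0=-\tfrac{\pi n(\la)}{2}K^\dag FK$, and integrates: since $K$ and $F$ are bounded and $n(\la)=o(\la^{-1})$, the whole $x$-dependence is $o(\la^{-1})$ and the claim reduces to $J_0(0,\la)=I+\ord(\la^{-1})$, i.e.\ to the Jost asymptotics at the single point $x=0$. The paper's route thus sidesteps exactly the part you identify as the main obstacle -- controlling the conjugation by $e^{\ii\eta_\pm(\la)x\sigma_3}$ at general $x$ -- whereas your route handles it explicitly and correctly via $\eta_--\eta_+=\tfrac{\pi\ii}{2}n(\la)$ (diagonal factors $e^{\mp\pi n(\la)x/2}\to1$) and the reality of $\eta_++\eta_-$ on $\mathbb{R}$ (off-diagonal factors unimodular); your version has the advantage of making the uniformity in $x$ transparent without invoking Lemma 4.1. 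One small inaccuracy: you invoke compact support of $F(0,y,s)-\sigma_3$ in $s$, which is not assumed; boundedness of $F$ together with the fast decay of $n(s)$ already gives $G_\pm(0,y,\la)-g_\pm(\la)\sigma_3=\ord(\la^{-1})$, so the conclusion stands under the paper's hypotheses.
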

\begin{proof}
The matrix $K:=K^-(x,\la)$ satisfies equation:
$$
K^\prime_x= (\ii\la\sigma_3 +H(0,x)-\ii G_-(0,x,\la))K
$$
Hermitian conjugation matrix $K^\dag$ satisfies the following one:
$$
K^{\dag\prime}_x= K^\dag(-\ii\la\sigma_3 +H^\dag(0,x)+\ii G_-^{\dag}(0,x,\la))
$$
Since $H^\dag(0,x)=-H(0,x)$ and $G_-^{\dag}(0,x\la)=G_{+}(0,x,\la)$, the last equation takes the form:
$$
K^{\dag\prime}_x= K^\dag(-\ii\la\sigma_3 -H^\dag(0,x)+\ii G_{+}(0,x,\la)).
$$
Take into account definition of $J_0(x,\la)$ we find
$$
J_{0x}^\prime=\ii K^\dag(G_{+}(0,x,\la)-G_{-}(0,x,\la))K=-\frac{\pi n(\la)}{2} K^\dag F(0,x,\la)K.
$$
Since $F(0,x,\la)$ and $K=K^-(x,\la)$ are bounded, and  $n(\la)=o(\la^{-1})$ as
 $\la\to\pm\infty$, we have that $J_0(x,\la)=\hat J(\la)+ o(\la^{-1})$ as $\la\to\pm\infty$. Then $\hat J(\la)=J_0(0,\la)+ o(\la^{-1})$ and $J_0(0,\la)=K^{+-1}(0,\la)K^-(0,\la)=I+\ord(\la^{-1})$ as $\la\to\pm\infty$. Hence $\hat J(\la)=I+\ord(\la^{-1})$ as $\la\to\pm\infty$.
\end{proof}

For the problem on whole $t$-line the jump matrix $J_0(x,\la)$ can be found explicitly. Indeed, for $t\to-\infty$ we have
\begin{align}\label{Jline}
J_0(x,\la)=&\begin{pmatrix}
     Z^+[1](t,x,\la) &\D \frac{Y^+[2](t,x,\la)}{a^+(\la)}
   \end{pmatrix}^{-1}\times\begin{pmatrix}
     \D \frac{Y^-[1](t,x,\la)}{\overline a^{\, -}(\la)}&Z^-[2](t,x,\la)
   \end{pmatrix}\nonumber\\=&\begin{pmatrix}
                   e^{\ii\la t-\ii x\eta_+(\la)} &-\D\frac{b^+(\la)}{a^+(\la)} e^{-\ii\la t+\ii x\eta_+(\la)} \\\\
                   0 & e^{-\ii\la t+\ii x\eta_+(\la)} \\
                 \end{pmatrix}
\begin{pmatrix}e^{-\ii\la t+\ii x\eta_-(\la)} & 0 \\\\
-\D\frac{\overline b^{\, -}(\la)}{\overline a^{\, -}(\la)}e^{\ii\la t-\ii x\eta_-(\la)} & e^{\ii\la t-\ii x\eta_-(\la)}\end{pmatrix} \\
=&\begin{pmatrix}
e^{-\ii x(\eta_+(\la)-\eta_-(\la))}+\overline r^{\, -}(\la)r^+(\la)e^{\ii x(\eta_+(\la)-\eta_-(\la))} & -r^+(\la)e^{\ii x(\eta_+(\la)-\eta_-(\la))}\\\\
-\overline r^{\, -}(\la)e^{\ii x(\eta_+(\la)-\eta_-(\la))}&e^{\ii x(\eta_+(\la)-\eta_-(\la))}\\\end{pmatrix},\nonumber
\end{align}
where $\overline r^{\, -}(\la)=r^{+\, *}(\la)$.

Now we put
\be\label{defM} M(t,x,z)=\hat\Phi(t,x,z)e^{\ii(zt-x\eta(z))\sigma_3},\qquad z\in\mathbb{C}\setminus\mathbb{R},
\ee
where $\eta(z)$ is defined by (\ref{eta}).
Then $M(t,x,z)$ is a solution of the following
Riemann-Hilbert problem $RH_{tx}$:
{\it \begin{itemize}
\item $M(t,x,z)$ is analytic ($a(z)\neq0$) or
meromorphic ($a(z)$ has finite number of zeroes) in
$z\in\mathbb{C}\setminus\mathbb{R}$ and continuous up to the real $\la$-axis;
\hfill{$RH1$}
\item If $a(z_j)=\overline a(z_j^*)=0, j=1,2,...,p$ then
$M(t,x,z)$ has poles at points $z=z_j$, $z=z^*_j$ ($j=1,2,...,p$),
and corresponding residues satisfy relations:\\

$\res\limits_{z=z_j} M[2](t,x,z)=m_je^{-2\ii(z_jt-x\eta(z_j))} M[1](t,x,z_j)$ \hfill{$RH2$}\\

$\res\limits_{z=z^*_j} M[1](t,x,z)=m^*_je^{-2\ii(z_jt-x\eta(z^*_j))}
M[2](t,x,z^*_j),$ \hfill{$RH3$}\\\\
where $m_j=\gamma_j/\dot a(z_j)$, $m^*_j=\overline\gamma_j/\dot{\overline a}^{\, -}
(z^*_j)$, and numbers $\gamma_j$, $\overline\gamma_j=\gamma^*_j$ are defined in
(\ref{gj1}) and (\ref{gj2});
\item  $M_{-}(t,x,\la)=M_{+}(t,x,\la)J(t,x,\la), \quad
\la\in\mathbb{R},$\hfill{$RH4$}
\be
J(t,x,\la)= e^{-\ii(\la t-x\eta_+(\la))\sigma_3}J_0(x,\la)e^{\ii(\la t-x\eta_-(\la))\sigma_3}; \label{J1}
\ee
\item $ M(t,x,z)=I+O(z^{-1}),\quad |z|\to\infty.$ \hfill{$RH5$}
\end{itemize}}
Analytical properties follow from (\ref{hatPhi}), and residues relations arise from (\ref{gj1})-(\ref{gj2}). The jump matrix $J_0(x,\la)$ in (\ref{J1}) was described in (\ref{J0}), (\ref{xpm0}) and Lemmas 4.1 and 4.2. The asymptotic behavior of $ M(t,x,z)$ follows from sections 2 and 3.

For the problem on whole $t$-line the jump matrix degenerates to explicit one:
\begin{align*}
J(t,x,\la)=&e^{-\ii(\la t-x\eta_+(\la))\sigma_3}J_0(x,\la)e^{\ii(\la t-x\eta_-(\la))\sigma_3}\\\\
=&\begin{pmatrix}
    1+|r^+(\la)|^2e^{2\ii x(\eta_+(\la)-\eta_-(\la))} &
    -r^+(\la)e^{-2\ii\la t+2\ii x\eta_+(\la)}\\  \\
    -r^{+*}(\la)e^{2\ii\la t-2\ii x\eta_-(\la)}  & 1 \\
\end{pmatrix}.
\end{align*}
Here we have used (\ref{Jline}), (\ref{ab}).
Since $\eta_\pm=\eta(\la\pm\ii0)=\la-I_0(\la)\mp\D\frac{\pi\ii}{4}n(\la)$ where
$I_0(\la)={\rm p.v.}\int\limits_{-\infty}^\infty\D\frac{n(s)}{s-\la}\D\frac{ds}{4}$ then
$$
J(t,x,\la)=I+\ord(e^{\pi n(\la)x/2})\rightarrow I, \qquad x\to+\infty
$$
for a long attenuator ($n(\la)<0$), but it is unbounded ($\ord(e^{\pi n(\la)x})$) for a long amplifier ($n(\la)>0$).  Therefore the jump matrix is exponentially closed to the identity matrix if $n(\la)<0$. Hence the main term of asymptotics contains solitons generated by discrete spectrum, while a contribution of continuous spectrum is exponentially small. This justifies the phenomenon of self-induced transparency of attenuators. For the first time it was proved in \cite{AKN}.

In order to define a matrix RH problem for the mixed problem, we introduce another matrix
\begin{align*}%\label{MupMlo}
   \hat\Phi(t,x,z)=\begin{pmatrix}
     Z[1](t,x,z) &\D \frac{Y[2](t,x,z)}{a(z)}
   \end{pmatrix}, & \quad z\in\mathbb{C}_+\setminus D^+,\\
   =\begin{pmatrix}
     \D \frac{\overline Y[1](t,x,z)}{{\overline a}(z)}&\overline Z[2](t,x,z)
   \end{pmatrix},& \quad z\in\mathbb{C}_-\setminus D^-,\\
   =\begin{pmatrix}
     \D \frac{Y[2](t,x,z)}{b(z)}&Z[2](t,x,z)
   \end{pmatrix}, & \quad z\in D^+,\\
   =\begin{pmatrix}
     \overline Z[1](t,x,z)&\D \frac{\overline Y[1](t,x,z)}{{\overline b}(z)}
   \end{pmatrix},& \quad z\in  D^-,
\end{align*}
when $n(\la)>0$.
By the same way, using residues relations (\ref{gj3})-(\ref{gj6}), we obtain a  meromorphic matrix RH problem. But for the sake of simplicity we assume that $a(z)\neq0$ in the domain $\mathbb{C}_+\setminus D^+$ and $b(z)\neq0$ in the domain $D^+$, i.e. we consider a regular RH problem. In this case matrix
$M(t,x,z)=\hat\Phi(t,x,z)e^{\ii(zt-x\eta(z))\sigma_3}$ satisfies the next items.
{\it\begin{itemize}
\item $M(t,x,z)$ is analytic
$z\in\mathbb{C}\setminus\Sigma, \, \Sigma:= \mathbb{R}\cup\gamma\cup\bar\gamma$ and continuous up to the contour $\Sigma$;
\hfill{$RRH1$}
\item  $M_{-}(t,x,z)=M_{+}(t,x,z)J(t,x,z), \quad
z\in\Sigma,$\hfill{$RRH2$}
$$
J(t,x,z)= e^{-\ii(\la t-x\eta_+(\la))\sigma_3}J_0(x,\la)e^{\ii(\la t-x\eta_-(\la))\sigma_3},\qquad \la\in\mathbb{R}\\
$$$$\hskip3cm=e^{-\ii(zt-x\eta(z))\sigma_3}J_0(z)e^{\ii(zt-x\eta(z))\sigma_3},
\qquad z\in\gamma\cup\bar\gamma, \label{J}
$$
where
\begin{align*}
J_0(x,z)=&(K_0^+(x,\la))^{-1}K_0^-(x,\la),& z=\la\in\mathbb{R}_D=\mathbb{R}\cap\overline{D_-\cup D_+},\\\\
=&(K^+(x,\la))^{-1}K^-(x,\la),& z=\la\in\mathbb{R}\setminus\mathbb{R}_D.
\end{align*}
\begin{align*}
J_0(z)=& \begin{pmatrix} 0&-\D\frac{b(z)}{a(z)} \\
\D\frac{a(z)}{b(z)} &1\end{pmatrix},& z\in\gamma,\\
=& \begin{pmatrix} 1&-\D\frac{\bar a(z)}{\bar b(z)} \\
\D\frac{\bar b(z)}{\bar a(z)} &0\end{pmatrix},& z\in\bar\gamma
\end{align*}
\item $\det J(t,x,z)\equiv1$  for $z\in\Sigma$;
\item $ M(t,x,z)=I+O(z^{-1}),\quad |z|\to\infty.$ \hfill{$RRH3$}
\end{itemize}
Here $K_0^\pm(x,\la)=w^\pm(x,\la)S_0^\pm(\la),\qquad K^\pm(x,\la)=w^\pm(x,\la)S^\pm(\la)$ and
\begin{align*}
S_0^+(\la)=&\begin{pmatrix}1&0\\\\\D\frac{a^+(\la)}{b^+(\la)}&1\end{pmatrix}&
S_0^-(\la)=&\begin{pmatrix}1&-\D\frac{\overline a^-(\la)}{\overline b^-(\la)}\\\\0&1\end{pmatrix},\\
S^+(\la)=&\begin{pmatrix}1&\D\frac{b^+(\la)}{a^+(\la)}\\\\0&1\end{pmatrix},&
S^-(\la)=&\begin{pmatrix}1&0\\\\-\D\frac{\overline b^-(\la)}{\overline a^-(\la)}&1\end{pmatrix}.
\end{align*}}

Note that matrix $J_0(x,z)$ depends on $x$ for $z=\la\in \mathbb{R}$,  and matrix $J_0(z)$ is independent on $x$ (and on $t$ as well) for $z\in\gamma\subset\mathbb{C}_+$ and $z\in\bar\gamma\subset\mathbb{C}_-$. Thus we prove
\begin{thm}
Let ${\mathcal E}(t,x)$, ${\mathcal N}(t,x,\lambda)$ and $\rho(t,x,\lambda)$ be the solution of the mixed problem (\ref{ic})-(\ref{bc}) to  the Maxwell-Bloch equations (\ref{MB1})-(\ref{MB3}). Then there exists matrix $M(t,x,z)$ which is the solution of the meromorphic Riemann-Hilbert problem ($RH1$)-($RH5$) if $n(\la)<0$ or regular (under additional conditions $a(z)\neq0$ and $b(z)\neq0$) RH problem ($RRH1$)-($RRH3$) if $n(\la)>0$.  Complex electric field envelope ${\mathcal E}(t,x)$ is defined by relation:
\begin{align}\label{El}
{\mathcal E}(t,x)=&-\lim\limits_{z\to\infty}4\ii z M_{12}(t,x,z),
\end{align}
and ${\mathcal N}(t,x,\lambda)$ and $\rho(t,x,\lambda)$ can be found from linear equations (\ref{MB2})-(\ref{MB3}) by already known ${\mathcal E}(t,x)$.
\end{thm}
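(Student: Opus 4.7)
The plan is to build the matrix $M(t,x,z)$ explicitly from the compatible eigenfunctions constructed in Section~2 and then check the five RH axioms one by one, finishing with the expansion at infinity that yields the reconstruction formula. First I would take the piecewise definition of $\hat\Phi(t,x,z)$ given just before the theorem (and its analogue above it for $n(\la)<0$) and set $M(t,x,z)=\hat\Phi(t,x,z)\,e^{\ii(zt-x\eta(z))\sigma_3}$. The sectional analyticity and continuity up to the contour $\Sigma$ (either $\mathbb{R}$ in the attenuator case or $\mathbb{R}\cup\gamma\cup\bar\gamma$ in the amplifier case) then follows immediately from properties (4)--(5) of the columns of $Y^\pm$ and $Z^\pm$ listed at the end of Section~2, since $\eta(z)$ is itself sectionally analytic with the same cut structure. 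The residue conditions $RH2$--$RH3$ are a direct translation of the linear dependence identities (\ref{gj1})--(\ref{gj6}) at zeros of $a$ (and, in the regular amplifier case, at zeros of $b$), after dividing by $\dot a(z_j)$ and inserting the exponential factor $e^{-2\ii(z_jt-x\eta(z_j))}$ that comes from the conjugation by $E(t,x,z)=e^{\ii(zt-x\eta(z))\sigma_3}$.

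Next I would derive the jump relation $RH4$/$RRH2$ from the scattering identities (\ref{sc}). Across $\mathbb{R}$, substituting $Y^\pm=Z^\pm T^\pm$ into the definition of $M_\pm$ and collecting exponentials gives exactly the conjugated matrix $e^{-\ii(\la t-x\eta_+\la)\sigma_3}J_0(x,\la)e^{\ii(\la t-x\eta_-(\la))\sigma_3}$, with $J_0(x,\la)$ written in the factorized form (\ref{J0}) established in Lemma~4.1 and Lemma~4.2. In the amplifier case one additionally needs the jumps across $\gamma$ and $\bar\gamma$: these arise precisely because $Z^\pm[1]$ and $Z^\pm[2]$ swap analytic domains as one crosses the curve $\Im\eta(z)=0$, so comparing the two definitions of $\hat\Phi$ on either side of $\gamma$ and using the identities $a=\det[Z[1],Y[2]]$ and $b=\det[Y[2],Z[2]]$ produces the explicit off-diagonal matrices displayed in $RRH2$. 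The normalization $RH5$/$RRH3$ is read off from the large-$z$ asymptotics of $Y[2]/a(z)$ and $Z[1]$ (and their conjugates), together with $\eta(z)=z+\mathrm{O}(z^{-1})$, which guarantees that the exponential factors in $M$ tame themselves off the real axis.

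Finally, to extract $\mathcal{E}(t,x)$ I would compute the differential equation satisfied by $M$. Since $\hat\Phi$ is built out of solutions of the $t$-equation (\ref{teq}), one has $\hat\Phi_t=-(\ii z\sigma_3+H)\hat\Phi$; conjugating by $E(t,x,z)$ and using $E^{-1}\sigma_3 E=\sigma_3$ yields
\begin{equation*}
M_t+\ii z\,[\sigma_3,M]=-H(t,x)\,M.
\end{equation*}
Expanding $M=I+M_1(t,x)z^{-1}+\mathrm{O}(z^{-2})$ at infinity and equating the $z^0$ coefficient gives $\ii[\sigma_3,M_1]=-H$, i.e.\ $2(M_1)_{12}=\tfrac{\ii}{2}\mathcal{E}(t,x)$, whence $\mathcal{E}(t,x)=-4\ii\lim_{z\to\infty}zM_{12}(t,x,z)$. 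Then $\rho$ and $\mathcal{N}$ are recovered by integrating (\ref{MB2})--(\ref{MB3}) with the now known $\mathcal{E}$ and the initial data $\rho_0$, $\mathcal{N}_0$.

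The main obstacle is the jump on $\gamma\cup\bar\gamma$ in the amplifier case: one has to bookkeep which analytic continuation of $Z^\pm[1]$, $Z^\pm[2]$, $a$ and $b$ is in force on each side of the curve, and check that the off-diagonal entries $-b/a$ and $a/b$ of the claimed jump matrix are consistent with the identities relating these boundary values through the scattering matrix $T^\pm$. Once this bookkeeping is set up correctly, every other step is a routine verification based on Lemmas~2.1--2.5 and the Section~3 structure of $T^\pm(\la)$.
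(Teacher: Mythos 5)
Your proposal follows essentially the same route as the paper: define $M=\hat\Phi\,e^{\ii(zt-x\eta(z))\sigma_3}$ from the compatible eigenfunctions of Section~2, read off analyticity, residue, jump and normalization conditions from Lemmas~2.2--2.5, the scattering relations (\ref{sc}), and Lemmas~4.1--4.2, and then obtain (\ref{El}) by substituting $M$ into the $t$-equation and matching the $z^{-1}$ coefficient, exactly as in the paper's argument $H(t,x)=-\ii[\sigma_3,m(t,x)]$. The computation $\mathcal{E}=-4\ii\lim_{z\to\infty}zM_{12}$ and the recovery of $\rho$, $\mathcal{N}$ from (\ref{MB2})--(\ref{MB3}) agree with the paper, so the proposal is correct and not a genuinely different proof.
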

\begin{proof}
Formula (\ref{El}) follows from (\ref{teq}) and ($RH5$). Indeed, substituting (\ref{defM}) into
equation (\ref{teq}), we find
\be\label{Mt}
M_t+\ii z [\sigma_3,M]+HM=0.
\ee
Using ($RH5$) we put
$$
M(t,x,z) = I+\D \frac{m(t,x)}{z}+\mathrm{o}(z^{-1}),
$$
where
$$
m(t,x)=\lim\limits_{z\to\infty} z(M(t,x,z)-I).
$$
This asymptotics and equation (\ref{Mt}) give
\be\nonumber
H(t,x)=-\ii[\sigma_3,m(t,x)]
\ee
and hence
$$
{\mathcal E}(t,x)=-4\ii m_{12}=-\lim\limits_{z\to\infty}4\ii z M_{12}(t,x,z).
$$
\end{proof}
\emph{ Thus the mixed problem on the finite interval $0<x<L$ (or  half-line $0<x<\infty$) for the Maxwell-Bloch equations is linearizable completely.
}
\section{More general matrix Riemann-Hilbert problem}
\setcounter{equation}{0}

Now we prove that any Riemann-Hilbert problem like $RRH1$-$RRH3$ generates a solution
to the Maxwell-Bloch equations. From here and below we will consider more general construction.
Let oriented contour $\Sigma$ contains real line $\mathbb{R}$, circle
$\Gamma$ of sufficiently large radius and some finite arcs $\gamma_j\cup\overline\gamma_j$ ($j=1,2,....,p$), which
are symmetric over the real line. Thus
$$
\Sigma=\mathbb{R}\cup\Gamma\cup\bigcup\limits_{j=1}^p\gamma_j\cup\overline\gamma_j.
$$
If $n(\la)>0$ then contour $\Sigma$ includes additionally the closed oval $\gamma\cup\bar\gamma$ where $\Im\eta(z)=0$.
Such contours appear when we deal with periodic initial data or/and
periodic boundary conditions. The large circle $\Gamma$ allows to bypass   difficulties connected with discrete spectrum and spectral singularities (sf.\cite{Zhou}, \cite{MK}). Contour $\Sigma $ has the following orientation: real line $\mathbb{R}$ is oriented from the left to the right, the circle $\Gamma$ and oval $\gamma\cup\bar\gamma$  are oriented clock-wise, the arcs $\gamma_j\cup\overline\gamma_j$ are oriented updown. Then the formulation of a regular matrix RH problem is as follows.

{\it Find $2\times2$ matrix $M(t,x,z)$ such that
\begin{itemize}
\item $M(t,x,z)$ is analytic in
$z\in\mathbb{C}\setminus\Sigma$ and bounded up to the contour $\Sigma$;
\hfill{$R1$}
\item  $M_{-}(t,x,z)=M_{+}(t,x,z)J(t,x,z), \quad
z\in\Sigma;$\hfill{$R2$}
\item $\det J(t,x,z)\equiv1$  for $z\in\Sigma$;\hfill{$R3$}
\item $ M(t,x,z)=I+O(z^{-1}),\quad |z|\to\infty.$ \hfill{$R4$}
\end{itemize}
}

Let contour $\Sigma$ and jump matrix $J(t,x,z)$ are satisfied the Schwartz reflection principle:
\begin{itemize}
\item  contour $\Sigma$ is symmetric over real axis $\mathbb{R}$,
\item $J^{-1}(x,t,z)=  J^\dag(x,t,z^*)$ for $z\in\Sigma$ and $\Im z\neq0$,
\end{itemize}
where $\dag$ and $*$ are Hermitian and complex conjugations respectively.

Furthermore,
\begin{itemize}
\item  jump matrix $J(t,x,\la)$ for $\la\in\mathbb{R}$ and $x\in\mathbb{R}$ has a positive definite
real part and following asymptotic behavior:
$$
J(t,x,\la)=I+\ord(\la^{-1}), \qquad \la\to\pm\infty.
$$
\end{itemize}
\begin{thm}\label{t4.2}
Let jump matrix $J(t,x,z)$ satisfies the Schwartz reflection principle, has a positive definite real part and  $I-J(t,x,.)\in L^2(\Sigma)\cap L^\infty(\Sigma)$. Then for any fixed $t, x\in \mathbb{R}$, the regular RH problem $R1$, $R2$, $R3$, $R4$ has a unique solution $M(t,x,z)$.
\end{thm}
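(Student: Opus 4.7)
The plan is to follow the standard Zhou vanishing-lemma strategy. First I would reformulate the RH problem $R1$--$R4$ as a singular integral equation on $\Sigma$: using the hypothesis $I-J\in L^2(\Sigma)\cap L^\infty(\Sigma)$, factor $J=(I-w_-)^{-1}(I+w_+)$ with $w_\pm\in L^2(\Sigma)\cap L^\infty(\Sigma)$, and look for $\mu\in I+L^2(\Sigma)$ satisfying a Beals--Coifman equation $(I-C_w)\mu=I$, where $C_w\varphi:=C_+(\varphi w_-)+C_-(\varphi w_+)$ and $C_\pm$ are the Cauchy projectors on $\Sigma$. The operator $C_w$ is bounded on $L^2(\Sigma)$ and $I-C_w$ is Fredholm of index zero (by standard compactness/approximation arguments together with the regularity of $J$), so by the Fredholm alternative the existence of a solution $M$ reduces to the triviality of the kernel of the homogeneous problem.

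The core step is therefore the vanishing lemma: any matrix $M^0$ satisfying $R1$--$R3$ with the normalization replaced by $M^0(z)\to 0$ as $|z|\to\infty$ must vanish identically. Given such $M^0$, I define
\[
\phi(z):=M^0(z)\,\bigl[M^0(z^*)\bigr]^\dag,
\]
which is holomorphic in $z$ off $\Sigma$ (the contour $\Sigma$ is symmetric about $\mathbb{R}$). The Schwartz-reflection hypothesis $J^{-1}(z)=J^\dag(z^*)$ on $\Sigma\setminus\mathbb{R}$, combined with the symmetric orientation of $\Sigma$, forces the jumps of the two factors to cancel on every non-real component of $\Sigma$, so $\phi$ extends analytically across $\Sigma\setminus\mathbb{R}$ and is in fact analytic on $\mathbb{C}\setminus\mathbb{R}$. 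On the real line, using $M^0_-=M^0_+J$ one computes
\[
\phi_+(\la)+\phi_-(\la)=M^0_+(\la)\,\bigl(J(\la)+J^\dag(\la)\bigr)\,[M^0_+(\la)]^\dag,
\]
which is pointwise positive semidefinite thanks to the assumption that $\Re J$ is positive definite.

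Since $M^0(z)=O(z^{-1})$ at infinity (from the Cauchy representation), $\phi(z)=O(z^{-2})$, and Cauchy's theorem applied in each half-plane yields $\int_{\mathbb{R}}\phi_\pm(\la)\,\dd\la=0$. Adding the two identities and taking the matrix trace gives
\[
\int_{\mathbb{R}}\operatorname{tr}\Bigl(M^0_+(\la)\,(J+J^\dag)(\la)\,[M^0_+(\la)]^\dag\Bigr)\,\dd\la=0.
\]
The integrand is pointwise nonnegative and, by positive definiteness of $J+J^\dag$, vanishes only where $M^0_+(\la)=0$; hence $M^0_+\equiv 0$ a.e.\ on $\mathbb{R}$, so $M^0_-=M^0_+J=0$ as well. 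Morera's theorem then extends $M^0$ analytically across $\mathbb{R}$ as zero, and combined with the earlier extension across $\Sigma\setminus\mathbb{R}$ we conclude $M^0\equiv 0$. Uniqueness for $R1$--$R4$ follows, and by Fredholm index zero so does existence.

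The main obstacle I expect is the careful bookkeeping of orientations and $\pm$-sides needed to verify that $\phi$ genuinely has no jump across the non-real components of $\Sigma$ (the large circle $\Gamma$, the arcs $\gamma_j\cup\overline\gamma_j$, and in the case $n(\la)>0$ the oval $\gamma\cup\overline\gamma$). Conceptually this is forced by $J^{-1}(z)=J^\dag(z^*)$, but the orientation conventions stated before the theorem (real line left-to-right, circle and oval clockwise, arcs upward) must be matched consistently with the reflection $z\mapsto z^*$; that is where sign errors typically hide. A secondary point is the Fredholm index calculation: one checks that $C_w$ is compact modulo a manifestly invertible reference operator (for instance the one obtained by setting $J\equiv I$), which together with the vanishing lemma closes the argument.
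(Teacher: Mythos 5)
Your proposal is correct and follows the same overall strategy as the paper --- reduce the RH problem to a singular integral equation on $\Sigma$ and exploit the Schwartz reflection symmetry together with the positive definite real part of $J$ on $\mathbb{R}$ --- but it differs in two concrete ways. For existence, the paper writes the candidate solution as a Cauchy integral with density $P(I-J)$, obtains the equation $(Id-\mathcal{K})Q=R$, and then simply \emph{cites} Theorem 9.3 of Zhou for the $L^2$ invertibility of $Id-\mathcal{K}$; you instead set up the Beals--Coifman form $(I-C_w)\mu=I$ and reprove the content of that citation, namely the vanishing lemma for $\phi(z)=M^0(z)[M^0(z^*)]^\dag$ plus a Fredholm index-zero argument. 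Your computation of $\phi_++\phi_-=M^0_+(J+J^\dag)[M^0_+]^\dag$ on $\mathbb{R}$, the cancellation of jumps across $\Sigma\setminus\mathbb{R}$ from $J^{-1}(z)=J^\dag(z^*)$, and the contour-closing argument giving $\int_{\mathbb{R}}\operatorname{tr}(\phi_++\phi_-)\,\dd\la=0$ are exactly the mechanism underlying the cited theorem, so your route is self-contained where the paper's is not; the price is that your Fredholm step (``compact modulo an invertible reference operator'') is left as a sketch, and for a contour with self-intersections like $\Sigma$ this is precisely the part of Zhou's machinery that requires care. For uniqueness, the paper argues differently: it first shows $\det M\equiv1$ (following Deift), then applies Liouville's theorem to $\tilde M M^{-1}$, which has no jump and tends to $I$; you instead apply the vanishing lemma to the difference $M-\tilde M$, which satisfies the homogeneous problem with $O(z^{-1})$ decay. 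Both are valid; yours avoids the separate $\det M\equiv1$ argument, while the paper's avoids needing the vanishing lemma at all once invertibility is granted by citation.
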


\begin{proof}
\emph{ Existence.} Let $t$ and $x$ be fixed.
We look for the solution  $ M(t,x,z)$ of the RH problem in the form
\begin{equation}\label{M}
M(t,x,z) =I+ \frac{1}{2\pi\ii}\int\limits_\Sigma
 \frac{P(t,x,s)[I- J(t,x,s)]}{s-z}ds, \qquad
z\notin\Sigma.
\end{equation}
The Cauchy integral (\ref{M}) provides all properties
of the RH problem (cf.\cite{D}) if and only if the matrix $Q(t,x,\la):=P(t,x,\la)-I$ satisfies the
singular integral equation
\begin{equation}\label{K}
Q(t,x,z)-{\mathcal K}[Q](t,x,z) =R(t,x,z), \qquad z\in\Sigma.
\end{equation}
The singular integral operator ${\mathcal K}$ and the right hand side $R(t,x,z)$ are as follows:
\begin{eqnarray*}{\mathcal K}[Q](t,x,z):=\D  \frac{1}{2\pi\ii}
\int\limits_\Sigma\D   {\frac{
Q(t,x,s)[I-J(t,x,s)]}{s-z_+}ds},\\\nonumber
R(t,x,z):=\D\frac{1}{2\pi\ii}\int\limits_\Sigma \D{\frac{
I-J(t,x,s)}{s-z_+}ds}.
\end{eqnarray*}
We consider this integral equation in the space $L^2(\Sigma)$ $2\times2$ matrix
complex valued functions $Q(z):=Q(t,x,z)$, $z\in\Sigma$. The norm of
$Q\in L^2(\Sigma)$ is given by
\[
||Q||_{L^2(\Sigma)}=
\left(\int\limits_\Sigma \rm{tr}(Q^\dag(z)
Q(z))|dz|\right)^{1/2}
=\left(\sum\limits_{j,l=1}^2\int\limits_{\Sigma}|Q_{jl}(z))|^2|dz|\right)^{1/2}.
\]
The operator ${\mathcal K}$  is defined by the jump matrix $J(t,x,z)$ and
the generalized function $$ \D \frac{1}{s-z_+}=
\lim_{z'\to
z,z'\in side+} \D \frac{1}{s-z'}.
$$
Furthermore, since the jump matrix $J(t,x,\la)$ has a positive definite
real part, when $\la\in\mathbb{R}$, then Theorem 9.3 from \cite{Zhou} (p.984)
guarantees the $L^2$ invertibility of the operator $Id-{\mathcal K}$
($Id$ is the identical operator). The function $R(t,x,z)$ belongs to
$L^2(\Sigma)$ because $I-J(t,x,z)\in L^2(\Sigma)$ when $z\in\Sigma$, and the Cauchy
operator
$$
C_+[f](z):=\D \frac{1}{2\pi\ii}\int\limits_\Sigma \D  {\frac{
f(s)}{s-z_+}ds}=\D \frac{f(z)}{2}+\text{p.v.}\D \frac{1}{2\pi\ii}\int\limits_\Sigma
\D {\frac{f(s)}{s-z}ds}
$$
is bounded in the space $L^2(\Sigma)$  \cite{LS}.
Therefore, the singular integral equation (\ref{K}) has a unique solution
$Q(t,x,z)\in L^2(\Sigma)$  for any  fixed $x, t\in\mathbb{R}$, and
formula (\ref{M}) gives the solution of the above RH problem.

\emph{Uniqueness.} A sketch of the proof is as follows.
Since $\det J(t,x,z)\equiv1$ one can find that $\det M(t,x,z)\equiv 1$ by repeating
step by step the proof of the Theorem 7.18 from \cite{D} (p.194-198). Hence the matrix $M^{-1}(t,x,z)$ exists, is analytic in $z\in\mathbb{C}\setminus\Sigma$ and bounded up to the contour $\Sigma$.
Let now suppose that there is another matrix $\tilde M(t,x,z)$, which solves the given Riemann-Hilbert problem. Then
$$
\tilde M_-(t,x,z)M_-^{-1}(t,x,z)=\tilde M_+(t,x,z)J(t,x,z)J^{-1}(t,x,z)
M_+^{-1}(t,x,z)$$$$=\tilde M_+(t,x,z)M_+^{-1}(t,x,z).
$$
Since $\tilde M(t,x,z)$ and $M^{-1}(t,x,z)$ are bounded up to the contour $\Sigma$, end points and points of self-intersection are removable singularities.
Hence the matrix $\tilde M(t,x,z)M^{-1}(t,x,z)$ is analytic in
$z\in\mathbb{C}$ and tends to identity matrix as $z\to\infty$. By Liovilles's theorem
$\tilde M(t,x,z)M^{-1}(t,x,z)\equiv I$ and therefore $\tilde M(t,x,z)\equiv M(t,x,z)$,
i.e. the matrix $M(t,x,z)$ is unique.
\end{proof}
%%%%%%%%%%%%%%%%%%%%%%%%%%%%%%%%%%%%%%%

Now we specialize the jump matrix $J(t,x,z)$ in such a way that corresponding RH problem solution $M(t,x,z)$  generates exactly the Maxwell-Bloch equations. Other specializations of the jump matrix $J(t,x,z)$ give another nonlinear integrable equations.
\begin{thm}\label{Phi}
Let inhomogeneous broadening $n(\la)$ be smooth and fast decreasing
for $\la\in\mathbb{R}$ and
$$
\eta(z)=z-\frac{1}{4}\int\limits_{-\infty}^\infty\frac{n(s)}{s-z}ds,
\qquad \int\limits_{-\infty}^\infty{n(s)}ds=\pm1.
$$
Let $\Phi(t,x,z):=M(t,x,z)e^{-\ii(zt-x\eta(z))\sigma_3}$, where $M(t,x,z)$ is the solution of the regular RH problem ($R1$)-($R4$) with the jump matrix $J(t,x,z)$ such that
\begin{itemize}
\item $J(t,x,z)=\begin{cases}e^{-\ii
(zt-x\eta(z))\sigma_3}\hat J(z)e^{\ii(zt-x\eta(z))\sigma_3}, \qquad z\in\Sigma\setminus\mathbb{R}\quad (\Im z\neq0),\\
e^{-\ii
(\la t-x\eta_+(\la))\sigma_3}\hat J(x,\la)e^{\ii(\la t-x\eta_-(\la))\sigma_3}, \qquad z=\la\in\mathbb{R};\end{cases}$
\item $\hat J(z)$ is independent on $t$, $x$ and satisfies the Schwartz reflection principle;
\item $\hat J(x,\la)$ is independent on $t$ and has a positive definite real part for any $x\in\mathbb{R}_+$ and $\la\in\mathbb{R}$;
\item $\hat J(x,\la)=I+\ord(\la^{-1}), \qquad \la\to\pm\infty.$
\end{itemize}
If $M(t,x,z)$ is absolutely continuous (smooth) in $t$ and $x$, then   $\Phi(t,x,z)$ ($z\in\mathbb{C}_\pm$) satisfies AKNS equations:
$$
\Phi_t = -(\ii z\sigma_3+H(t,x))\Phi, \qquad
\Phi_x =(\ii z\sigma_3+H(t,x)-\ii G(t,x,z))\Phi
$$
almost everywhere (point-wise) with respect to $t$ and $x$. These equations become
(\ref{teq}) and (\ref{pmxeq}) as $z=\la\pm\ii0$.  Matrix $H(t,x)$ is given by   
\begin{equation}\label{H}\nonumber
H(t,x)=-\ii[\sigma_3, m(t,x)],\qquad m(t,x)= \D\frac{1}{\pi}\int
\limits_{\Sigma}(I+Q(t,x,z))(J(t,x,z)-I)dz,
\end{equation}
where  $Q(t,x,z)$ is the unique solution of singular integral equation (\ref{K}).
Matrix $F(t,x,\lambda)$ is Hermitian and has the structure: $F(t,x,\lambda)=\begin{pmatrix}
  {\mathcal N}(t,x,\lambda) &\rho(t,x,\lambda) \\
  \rho^*(t,x,\lambda) &-{\mathcal N}(t,x,\lambda) \\
\end{pmatrix}$. It is reconstructed as the following jump:
$$
\D\frac{\pi n(\la)}{2}F(t,x,\la)=\Phi_x(t,x,z)\Phi^{-1}(t,x,z)\vert_{z=\la+\ii0}-
\Phi_x(t,x,z)\Phi^{-1}(t,x,z)\vert_{z=\la-\ii0}.
$$
\end{thm}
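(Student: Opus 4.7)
The strategy is a classical dressing-type argument: define the defect matrices
$$
A(t,x,z) := \Phi_t\Phi^{-1} + \ii z\sigma_3 + H(t,x), \qquad B(t,x,z) := \Phi_x\Phi^{-1} - \ii z\sigma_3 - H(t,x) + \ii G(t,x,z),
$$
and show that each is entire in $z$ and $O(z^{-1})$ at infinity, so that each vanishes by Liouville's theorem. The formulas for $H$ and $F$ will fall out of the large-$z$ expansion of $M$ and of the jump of $\Phi_x\Phi^{-1}$ across $\mathbb{R}$, respectively. As a preliminary, $\det J\equiv 1$ together with the uniqueness argument of Theorem \ref{t4.2} gives $\det M\equiv 1$, so $\Phi^{-1}$ is meromorphic on $\mathbb{C}\setminus\Sigma$ and both $A,B$ are meromorphic there.

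For $A$, on every component of $\Sigma$ the matrix $\hat J$ is $t$-independent. Differentiating $\Phi_- = \Phi_+\hat J$ in $t$ thus gives $(\Phi_t\Phi^{-1})_- = (\Phi_t\Phi^{-1})_+$, so $A$ has no jump on $\Sigma$ and extends to an entire matrix. Using $M = I + m(t,x)/z + O(z^{-2})$ and $\eta(z) = z + O(z^{-1})$ at infinity, a direct expansion gives
$$
\Phi_t\Phi^{-1} \;=\; M_t M^{-1} - \ii z M\sigma_3 M^{-1} \;=\; -\ii z\sigma_3 - \ii[m,\sigma_3] + O(z^{-1}) \;=\; -\ii z\sigma_3 - H + O(z^{-1})
$$
with $H = -\ii[\sigma_3,m]$. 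Hence $A = O(z^{-1})$, and Liouville forces $A\equiv 0$.

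For $B$ the jump analysis is more delicate. On $\Sigma\setminus\mathbb{R}$ the matrix $\hat J(z)$ is $x$-independent and $G(t,x,\cdot)$ is analytic (as a Cauchy integral supported on $\mathbb{R}$), so $B$ is jump-free there. On $\mathbb{R}$, differentiating $\Phi_- = \Phi_+\hat J(x,\la)$ in $x$ yields $(\Phi_x\Phi^{-1})_+ - (\Phi_x\Phi^{-1})_- = -\Phi_+\hat J_x\hat J^{-1}\Phi_+^{-1}$, which by the theorem's definition equals $\tfrac{\pi n(\la)}{2}F(t,x,\la)$. Sokhotski--Plemelj for $G$ gives $G_+ - G_- = \tfrac{\ii\pi}{2}Fn$, hence $\ii G$ contributes $-\tfrac{\pi n}{2}F$ to $B_+ - B_-$, precisely cancelling the jump of $\Phi_x\Phi^{-1}$. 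Thus $B$ extends entirely. At infinity, $\ii\eta(z)M\sigma_3 M^{-1} = \ii z\sigma_3 + \ii[m,\sigma_3] + O(z^{-1}) = \ii z\sigma_3 + H + O(z^{-1})$ (using $\eta(z) = z + O(z^{-1})$), while $M_xM^{-1}$ and $\ii G$ are $O(z^{-1})$, so $B = O(z^{-1})$. Liouville then gives $B\equiv 0$, i.e., $\Phi_x = (\ii z\sigma_3 + H - \ii G)\Phi$; the boundary values at $z = \la\pm\ii 0$ reproduce the bank equations (\ref{pmxeq}).

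The structural claims on $F$ follow from the symmetries of the RH data: the Schwartz reflection $\hat J^{-1}(z) = \hat J^\dag(z^*)$ off the real axis together with the $\sigma_2$-reduction inherited from the spectral construction of Section 4 propagate into the identity $\Phi_-(\la) = \Phi_+(\la)^{-\dag}$, so $\hat J(x,\la) = \Phi_-^\dag\Phi_-$ is positive definite Hermitian. A short algebraic check then shows that $\Phi_+\hat J_x\hat J^{-1}\Phi_+^{-1}$ is Hermitian and traceless, and the $\sigma_2$-reduction forces its off-diagonal entries to be complex conjugates — giving the stated form of $F$ with $\mathcal N$ real. The chief technical obstacle sits in the $B$-step: under the minimal hypothesis that $M$ is absolutely continuous in $x$, one must justify differentiating the jump relation pointwise a.e., verify that the resulting $F$ is regular enough to define $G$ as a bona fide Cauchy integral, and organize the two competing $O(z^{-1})$ contributions (from $\ii\eta(z)[m,\sigma_3]/z$ and from $-\ii G$) into a clean $O(z^{-1})$ remainder. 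The structural properties of $F$ are conceptually immediate but require careful bookkeeping of the $\sigma_2$-symmetry to exclude spurious diagonal entries.
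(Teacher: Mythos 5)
Your proposal is correct and follows essentially the same route as the paper: the $t$-part is the identical ``no jump $\Rightarrow$ entire $\Rightarrow$ Liouville'' argument with $H=-\ii[\sigma_3,m]$ read off from the large-$z$ expansion of $M$, and your defect matrix $B$ is just a repackaging of the paper's step of writing the sectionally analytic, decaying matrix $\Phi_x\Phi^{-1}-\ii z\sigma_3-H$ as the Cauchy integral of its jump $\tfrac{\pi n}{2}F$ across $\mathbb{R}$, with the Hermitian/traceless structure of $F$ obtained from the Schwartz symmetry and $\det\Phi=\mathrm{const}$ exactly as in the text. The only omission is the paper's concluding cross-differentiation of the two equations to recover the Maxwell--Bloch system, which lies beyond the literal claims you were asked to prove.
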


\begin{proof}
The matrix $\Phi(t,x,z)$ is analytic in $z\in\mathbb{C}\setminus\Sigma$ and has the
jump across $\Sigma$:
$$
\Phi_-(t,x,z)=\Phi_+(t,x,z)\hat J(x,z),
$$
where $\hat J(x,z)$ is independent on $t$. This relation implies:
$$
\frac{\partial\Phi_-(t,x,z)}{\partial t}\Phi^{-1}_-(t,x,z)=\frac{
\partial\Phi_+(t,x,z)}{\partial t}\Phi^{-1}_+(t,x,z)
$$
for $z\in\Sigma$. This  relation  means that the matrix
logarithmic derivative $\Phi_t(t,x,z)\Phi^{-1}(t,x,z)$ is
analytic (entire) in $z\in\mathbb{C}$. Indeed, matrices $M(t,x,z)$, $M^{-1}(t,x,z)$
and derivative (in $t$) $M_t(t,x,z)$  are analytic in
$z\in\mathbb{C}\setminus\Sigma$ and the Cauchy integral (\ref{M})
gives the following asymptotic formulas:
$$
M(t,x,z)=I+\frac{m_\pm(t,x)}{z} +\ord(z^{-2}),\qquad \frac{d
M(t,x,z)}{dt}=\frac{dm_\pm(t,x)/dt}{z}+O(z^{-2}), \quad z\in\mathbb{C}_\pm
$$
as $z\to\infty$. Hence
$$
\Phi_t(t,x,z)\Phi^{-1}(t,x,z)=-\ii z\sigma_3+\ii[\sigma_3,
m_+(t,x)]+\ord(z^{-1})
$$$$
 \hskip6cm=-\ii z\sigma_3+\ii[\sigma_3, m_-(t,x)]+\ord(z^{-1}),
\qquad z\to\infty,
$$
where $[A, B]:=AB-BA$ and
$$
m_-(t,x)=m_+(t,x)=m(t,x)=\frac{1}{\pi}\int\limits_\Sigma
(I+Q(t,x,z))(J(t,x,z)-I)d z
$$
Since $\Phi_t(t,x,z)\Phi^{-1}(t,x,z)$ has no jump across $\Sigma$ and $M_t(t,x,z)$, $M^{-1}(t,x,z)$ are bounded up to the boundary, then the end
points and points of self intersection of the contour $\Sigma$ are
removable singularities for $\Phi_t(t,x,z)\Phi^{-1}(t,x,z)$.
Therefore, by  Liouville's theorem, this derivative is a
polynomial:
$$
U(z):=\Phi_t(t,x,z)\Phi^{-1}(t,x,z)=-\ii z\sigma_3-H(t,x),
$$
where $H(t,x):=-\ii[\sigma_3,
m(t,x)]=\begin{pmatrix}0&q(t,x)\\p(t,x)&0\end{pmatrix}$. Using the
Schwartz symmetry reduction of the jump matrix $J(t,x,z)$, we
show (sf.\cite{FT}) that $U(z)=\sigma_2 U^*(z^*)\sigma_2$, where
$\sigma_2=\begin{pmatrix}0&-\ii\\\ii&0
\end{pmatrix}$. This reduction imply $H(t,x)= -H^\dag(x,t)$, i.e. $q(t,x)=-p^*(t,x)$ and we put $q(t,x):={\mathcal E}(t,x)/2$.
Formula (\ref{M}) gives an integral representation for the electric field envelope
\be\label{calE}{\mathcal E}(t,x)=\D \frac{2}{\pi}\int
\limits_\Sigma\left([I+Q(t,x,z)][J(t,x,z)-I]\right)_{12}dz.
\ee
through the solution $Q(t,x,z)$ of the singular integral equation (\ref{K}).
Thus $\Phi(t,x,z)$ satisfies equation (\ref{teq}), and a scalar function ${\mathcal E}(t,x)$ is defined by (\ref{calE}).

In contrast with previous case matrix logarithmic derivative
$\Phi_x(t,x,z)\Phi^{-1}(t,x,z)$ is analytic in $z\in\mathbb{C}_\pm$ only. Indeed, since matrix $\hat J(z)$ is independent on $t$ and $x$ for $z\in\Sigma\setminus\mathbb{R}$, then this logarithmic derivative is continuous across the contour $\Sigma\setminus\mathbb{R}$, while it is not continuous across the real line because $\hat J(x,\la)$ depends on $x$.
End points of the contour $\Sigma$ are removable singularities because  the matrices $M_x(t,x,z)$ and $M^{-1}(t,x,z)$ are bounded up the conjugation contour $\Sigma$.  Further, the asymptotic behavior  at infinity gives
$$
 \Phi_x(t,x,z)\Phi^{-1}(t,x,z)=\ii z\sigma_3 +H(t,x) +O(z^{-1}),
\qquad z\in\mathbb{C}_\pm, \quad z\to\infty.
$$
Therefore we find that $\D\Phi_x(t,x,z)\Phi^{-1}(t,x,z)-\ii z\sigma_3 -H(t,x)$ is represented as the Cauchy integral:
$$
\Phi_x(t,x,z)\Phi^{-1}(t,x,z)-\ii z\sigma_3 -H(t,x)=\D\frac{1}{4\ii}\int\limits_{-\infty}^\infty
\D\frac{F(t,x,s)n(s)}{s-z}ds, \qquad z\notin\mathbb{R},
$$
where $F(t,x,\la)$ is some matrix, and factor $1/4\ii$ is chosen for a convenience.
Due to reflection (symmetry) property of the jump matrix we find that $F(t,x,\la)$ is Hermitian. Since ${\rm tr} (\Phi_x(t,x,\la\pm\ii0)\Phi^{-1}(t,x,\la\pm\ii0)) =(\det\Phi(t,x,\la\pm\ii0))^\prime_x\equiv0$ and ${\rm tr}\sigma_3={\rm tr} H(t,x)=0$ then ${\rm tr}F(t,x,\la)=0$ and, hence, $F(t,x,\la)$ has the structure:
$$
F(t,x,\la):=\begin{pmatrix}{\mathcal N}(t,x,\la)&{\mathcal\rho}(t,x,\la)\\
{\mathcal\rho}^*(t,x,\la)&-{\mathcal N}(t,x,\la)\end{pmatrix}.
$$

Thus we see that the matrix $\Phi(t,x,z)$ satisfies two differential equations:
\begin{align}\label{teq1}
\Phi_t &=U(t,x,z)\Phi, \qquad U(t,x,z)=-\ii z\sigma_3-H(t,x) \\
\Phi_x &=V(t,x,z)\Phi, \qquad V(t,x,z)=\ii z\sigma_3+H(t,x)-\ii G(t,x,z),\label{xeq1}
\end{align}
where
$$
G(t,x,z)=\D\frac{1}{4}\int\limits_{-\infty}^\infty  \D\frac{F(t,x,s)n(s)}{s-z}ds, \qquad z\notin\mathbb{R}.
$$
For real $z=\la\in\mathbb{R}$ we have two differential in $x$ equations:
\be\label{pmxeq1}
\Phi_x =V_\pm(t,x,\la)\Phi, \qquad V_\pm(t,x,\la)=\ii\la\sigma_3+H(t,x)-
\ii G_\pm(t,x,\la),
\ee
where $G_\pm(t,x,\la):=G(t,x,\la\pm\ii0)$. In particular, the last equations give
a reconstruction of the function $F(t,x,\la)$:
$$
\D\frac{\pi n(\la)}{2}F(t,x,\la)=\Phi_x(t,x,\la+\ii0)\Phi^{-1}(t,x,\la+\ii0)-
\Phi_x(t,x,\la-\ii0)\Phi^{-1}(t,x,\la-\ii0),
$$
where $\Phi(t,x,z)=M(t,x,z)e^{-\ii(zt-x\eta(z))\sigma_3}$, and $M(t,x,z)$ is the solution of the regular RH problem ($R1$)-($R4$). The compatibility condition $(\Phi_{xt}(t,x,\la\pm\ii0)=\Phi_{tx}(t,x,\la\pm\ii0))$
gives the identity in $\la$:
\begin{eqnarray*}
U_x(t,x,\la)-V^\pm_t(t,x,\la)+[U(t,x,\la),V^\pm(t,x,\la)]=0.
\end{eqnarray*}
This identity is equivalent to
\begin{eqnarray*}
H_t(t,x)+H_x(t,x)-\D\frac{1}{4}\int\limits_{-\infty}^\infty  [\sigma_3,F(t,x,s)]n(s)ds=\\=
\D\frac{\ii}{4}\int\limits_{-\infty}^\infty\D\frac{(F_t(t,x,s)+[\ii s\sigma_3+H(t,x),
F(t,x,s)]}
{s-\la\mp\ii0}n(s)ds
\end{eqnarray*}
and it is possible if and only if the left and right hand sides are equal zero, i.e.
\begin{eqnarray*}
H_t(t,x)+H_x(t,x)-\D\frac{1}{4}\int\limits_{-\infty}^\infty  [\sigma_3,F(t,x,s)]n(s)ds&=0\\
F_t(t,x,\la)+[\ii\la\sigma_3+H(t,x),F(t,x,\la)]&=0.
\end{eqnarray*}
These matrix equations are equivalent to the MB equations (\ref{MB1})-(\ref{MB3}).
Thus we proved that the matrices $\Phi(t,x,\la\pm\ii0)$ satisfy equations (\ref{teq1}) and (\ref{pmxeq1}),
which coincide with AKNS system (\ref{teq}) and (\ref{pmxeq}), and
scalar functions ${\mathcal E}(t,x)$, ${\mathcal N}(t,x,\la)$ and ${\mathcal\rho}(t,x,\la)$
satisfy the Maxwell-Bloch equations (\ref{MB1})-(\ref{MB3}).
\end{proof}

\section{Conclusions}

It is proved that the mixed problem (\ref{ic})-(\ref{bc}) to the Maxwell-Bloch equations (\ref{MB1})-(\ref{MB3}) is linearizable completely by using the matrix Riemann-Hilbert problem $RH1$ - $RH5$ or $RRH1$ - $RRH3$. More general Riemann-Hilbert problem $R1$ - $R3$ generates a solution to the Maxwell-Bloch equations if conjugation contour and jump matrix are satisfied the Schwartz reflection principal and some special restrictions. Among generated solutions there are: solutions defined for $t\in\mathbb{R}$ and $x\in\mathbb{R}_+$ and studied in \cite{AKN}, \cite{GZM}, solutions to the mixed problem (\ref{MB1})-(\ref{MB3}), (\ref{ic})-(\ref{bc}) ($t,x\in\mathbb{R}_+$) with decreasing or periodic input pulse ${\mathcal E}(t,0)$ and different initial functions ${\mathcal E}(0,x)$, ${\mathcal N}(0,x)$, $\rho(0,x)$, etc. The kind of solutions is defined by the specialization of the conjugation contour and the jump matrix. Suggested matrix RH problems will be useful for studying the long time/long distance asymptotic behavior of solutions to the MB equations using the Deift-Zhou method of steepest decent.

%%%%%%%%%%%%%%%%%%%%%%%%%%%%%%%%%%%%%%%

%-------------------------------------------------------------% Bibliography

%-------------------------------------------------------------%

\begin{thebibliography}{XXX}
%\begin{thebibliography}{99}

\bibitem{L1} \emph{G.L. Lamb Jr.}, Propagation of ultrashort optical pulses. -
\emph{Phys. Lett. A}, \textbf {25A} (1967), 181-182

\bibitem{L2} \emph{G.L. Lamb Jr}, Analytical descriptions to ultrashort optical pulse
propagation in resonant media. - \emph{Rev.Mod.Phys.} \textbf{43} (1971), 99-124

\bibitem{L3}\emph{ G.L. Lamb Jr}, Phase variation in coherent-optical-pulse
propagation. - \emph{Phys.Rev.Lett.} \textbf{31} (1973), 196-199

\bibitem{L4}\emph{ G.L. Lamb Jr}, Coherent-optical-pulse propagation as an inverse
problem. - \emph{ Phys. Rev. A}, \textbf{9} (1974), 422–430

\bibitem{AKN}\emph{M.J.Ablowits, D.Kaup, A.C.Newell}, Coherent pulse propagation, a dispersive, irreversible phenomenon. - \emph{J. Math. Phys.}, \textbf {15} (1974), 1852-1858.

\bibitem{GZM} \emph{I.R.Gabitov, V.E.Zakharov, A.V.Mikhailov},
Maxwell-Bloch equations and inverse scattering transform method. -
\emph{Teor. Mat. Fiz} \textbf{63} (1985), 11-31

\bibitem{AS}
\emph{M.J.Ablowitz and H.Segur}, Solitons and the Inverse Scattering Transform. SIAM,
Philadelphia, 1981

\bibitem{M} \emph{S.V.Manakov}, Propagation of ultrshort optical pulse in a two-level laser
amplifier. - \emph{Zh.Eksp.Teor.Fiz} \textbf{83} (1982), 68-83

\bibitem{MN} \emph{S.V.Manakov, V.Yu.Novokshenov},
Complete asymptotic representation of electromagnetic pulse in a long two-level
amplifier. - \emph{Teor. Mat. Fiz} \textbf{69} (1986), 40-54

\bibitem{Kis} \emph{O.M.Kiselev}, Solution of Goursat problem for
Maxwell-Bloch equations. - \emph{Teor. Mat. Fiz} \textbf{98} (1994), 29-37

\bibitem{F1} \emph{A.S.Fokas}, A unified transform method for solving linear and
certain nonlinear PDEs. -\emph{ Proc. R. Soc. Lond. A.} \textbf{453} (1997), 1411-1443

\bibitem{FI} \emph{A.S.Fokas and A.R.Its},
The Linearization of the Initial Boundary Value Problem of the
Nonlinear Schr\"odinger Equation. - \emph{SIAM J. Math. Anal.} \textbf{27}
(1996), 738--764

\bibitem{FI1} \emph{A.S.Fokas and A.R.Its},
An Initial Boundary Value Problem for the Korteweg de Vries Equation. -
\emph{Mathematics and Computer in Simulation} \textbf{37} (1994), 293-321

\bibitem{FI2}
\emph{A.S.Fokas and A.R.Its},
An Initial Boundary Value Problem for the sine-Gordon Equation in
laboratory coordinates. - \emph{Teor.\ Mat.\ Fiz.} \textbf{92} (1992), 387--403]

\bibitem{BMK} \emph{A.Boutet de Monvel and V.P.Kotlyarov}, Scattering
problem for the Zakharov-Shabat equations on the semi-axis. - \emph{ Inverse Probl}
\textbf{16} (2000), 1813-1837

\bibitem{BMK2}
\emph{A.Boutet de Monvel and V.P.Kotlyarov}, Generation of asymptotic
solitons of the Nonlinear Schr\"odinger equation  by boundary
data. - \emph{J.Math.Phys.} \textbf{44} (2003), 3185-3215

\bibitem{BMK3} \emph{A. Boutet de Monvel and V. Kotlyarov},
Focusing Nonlinear Schr\"odinger Equation on the Quarter Plane
with Time-Periodic Boundary Condition: a Riemann- Hilbert
approach. - \emph{Journal of the Institute of Mathematics of Jussieu}
 \textbf{6} (2007), 579-611

\bibitem{MK}\emph{E.A.Moskovchenko and V.P.Kotlyrov}, A new Riemann-Hilbert problem in a model of
    stimulated Raman Scattering. - \emph{J. Phys. A: Math. Gen.} \textbf{39} (2006), 14591-14610

\bibitem{FT} \emph{ L.D.Fadeev and L.A.Takhtadjan},  Hamiltonian Methods in
the Theory of Solitons. Springer, Berlin, 1987

\bibitem{D} \emph{P.Deift},  Orthogonal Polynomials and Random Matrices:
A Riemann-Hilbert Approach.  CIMS NY University, 1999

\bibitem{Zhou} \emph{X.Zhou}, The Riemann–Hilbert problem and inverse scattering. -
\emph{SIAM J. Math. Anal.} \textbf{20} (1989), 966-986

\bibitem{LS} \emph{G.Litvinchuk and I.Spitkovskii}, Factorization of measurable  matrix
functions. Birkh\"auser Verlag, Basel,  1987

\end{thebibliography}
\end{document}